\newcommand{\f}{\bullet}
\renewcommand{\u}{\circ}
\renewcommand{\a}{b}
\newcommand{\ld}[1]{\langle #1 \rangle}
\newcommand{\lb}[1]{[#1]}
\begin{document}
\title{Focus-style proofs for the two-way alternation-free $\mu$-calculus}
%
%
\author{Jan Rooduijn\thanks{The research of this author has been made possible by a grant from the Dutch Research Council NWO, project number 617.001.857.} \and
Yde Venema}
\authorrunning{J.M.W. Rooduijn and Y. Venema}
%
\institute{ILLC, University of Amsterdam, The Netherlands}
\maketitle              
\begin{abstract}
We introduce a cyclic proof system for the two-way alterna-tion-free modal $\mu$-calculus. The system manipulates one-sided Gentzen sequents and locally deals with the backwards modalities by allowing analytic applications of the cut rule. The global effect of backwards modalities on traces is handled by making the semantics relative to a specific strategy of the opponent in the evaluation game. This allows us to augment sequents by so-called trace atoms, describing traces that the proponent can construct against the opponent's strategy. The idea for trace atoms comes from Vardi's reduction of alternating two-way automata to deterministic one-way automata. Using the multi-focus annotations introduced earlier by Marti and Venema, we turn this trace-based system into a path-based system. We prove that our system is sound for all sequents and complete for sequents not containing trace atoms.  

\keywords{two-way modal $\mu$-calculus  \and alternation-free \and cyclic proof theory}
\end{abstract}
\section{Introduction}
The modal $\mu$-calculus, introduced in its present form by Kozen~\cite{Kozen83}, is an extension of modal logic by least and greatest fixed point operators. It retains many of the desirable properties of modal logic, such as bisimulation invariance, and relatively low complexity of the model-checking and satisfiability problems. Nevertheless, the modal $\mu$-calculus achieves a great gain in expressive power, as the fixed point operators can be used to capture a form of recursive reasoning. This is illustrated by the fact that the modal $\mu$-calculus embeds many well-known extensions of modal logic, such as Common Knowledge Logic, Linear Temporal Logic and Propositional Dynamic Logic. 

A natural further extension is to add a converse modality $\breve a$ for each modality $a$. The resulting logic, called \emph{two-way modal $\mu$-calculus}, can be viewed as being able to reason about the past. As such, it can interpret the past operator of Tense Logic, and moreover subsumes $\mathsf{PDL}$ with converse. In this paper we are concerned with the proof theory of the two-way modal $\mu$-calculus.

Developing good proof systems for the modal $\mu$-calculus is notoriously difficult. In~\cite{Kozen83}, Kozen introduced a natural Hilbert-style axiomatisation, which was proven to be complete only more than a decade later by Walukiewicz~\cite{Walukiewicz00}. Central to this proof is the use of tableau systems introduced by Niwi{\'n}ski and Walukiewicz in~\cite{NiwinskiW96}. One perspective on these tableau systems is that they are cut-free Gentzen-style sequent systems allowing infinite branches. A proof in such a system, called a \emph{non-well-founded proof}, is accepted whenever every infinite branch satisfies a certain progress condition. In case this progress condition is $\omega$-regular (as it is in the case of the modal $\mu$-calculus), automata-theoretic methods show that for every non-well-founded proof there is a \emph{regular} proof, \emph{i.e.} a proof tree containing only finitely many non-isomorphic subtrees. Since these kind of proofs can be naturally presented as finite trees with back edges, they are called \emph{cyclic proofs}. As an alternative to non-well-founded proofs, one can use proof rules with infinitely many premisses. We will not take this route, but note that it has been applied to the two-way modal $\mu$-calculus by Afshari, J\"ager and Leigh in~\cite{AfshariJL19}. 

In~\cite{LangeS01} Lange and Stirling, for the logics $\mathsf{LTL}$ and $\mathsf{CTL}$, annotate formulas in sequents with certain automata-theoretic information. This makes it possible to directly construct cyclic proof systems, without the detour through automata theory. This technique has been further developed by Jungteerapanich and Stirling~\cite{Stirling14,Jungteerapanich09} for the modal $\mu$-calculus. Moreover, certain fragments of the modal $\mu$-calculus, such as the alternation-free fragment~\cite{MartiVenema21} and modal logic with the master modality~\cite{Rooduijn21} have received the same treatment. Encoding automata-theoretic information in cyclic proofs, through annotating formulas, makes them more amenable to proof-theoretic applications, such as the extraction of interpolants from proofs~\cite{MartiVenema21arxiv,AfshariLT21}. 

The logic at hand, the two-way modal $\mu$-calculus, poses additional difficulties. Already without fixed point operators, backwards modalities are known to require more expressivity than offered by a cut-free Gentzen system~\cite{Ohnishi1957}. A common solution is to add more structure to sequents, as \emph{e.g.} the nested sequents of Kashima~\cite{Kashima94}. This approach, however, does not combine well with cyclic proofs, as the number of possible sequents in a given proof becomes unbounded. We therefore opt for the alternative approach of still using ordinary sequents, but allowing analytic applications of the cut rule (see~\cite{Gore1999} for more on the history of this approach). The combination of analytic cuts and cyclic proofs has already been shown to work well in the case of Common Knowledge Logic~\cite{Zenger22}. Choosing analytic cuts over sequents with extended structure has recently also been gaining interest in the proof theory of logics without fixed point operators~\cite{ciabattoni2022theory}.

Although allowing analytic cuts handles the backwards modalities on a local level, further issues arise on a global level in the combination with non-well-founded branches. The main challenge is that the progress condition should not just hold on infinite branches, but also on paths that can be constructed by moving both up and down a proof tree. Our solution takes inspiration from Vardi's reduction of alternating two-way automata to deterministic one-way automata~\cite{Vardi98}. Roughly, the idea is to view these paths simply as upwards paths, only interrupted by several detours, each returning to the same state as where it departed. One of the main insights of the present research is that such detours have a natural interpretation in terms of the game semantics of the modal $\mu$-calculus. We exploit this by extending the syntax with so-called \emph{trace atoms}, whose semantics corresponds with this interpretation. Our sequents will then be one-sided Gentzen sequents containing annotated formulas, trace atoms, and negations of trace atoms. 

For the sake of simplicity we will restrict ourselves to the alternation-free fragment of the modal $\mu$-calculus. This roughly means that we will allow no entanglement of least and greatest fixed point operators. In this setting it suffices to annotate formulas with just a single bit of information, distinguishing whether the formula is \emph{in focus}~\cite{MartiVenema21}. This is a great simplification compared to the full language, where annotations need to be strings and a further global annotation, called the \emph{control}, is often used~\cite{Stirling14,Jungteerapanich09}. Despite admitting simple annotations, the trace structure of the alternation-free modal $\mu$-calculus remains intricate. This is mainly caused by the fact that disjunctions may still appear in the scope of greatest fixed point operators, causing traces to split. 

While this paper was under review, the preprint \cite{Enqvistetal} by Enqvist et al. appeared, in which a proof system is presented for the two-way modal $\mu$-calculus (with alternation). Like our system, their system is cyclic. Moreover, they also extend the syntax in order to apply the techniques from Vardi in a proof-theoretical setting. However, their extension, which uses so-called \emph{ordinal variables}, is substantially different from ours, which uses trace atoms. It would be interesting to see whether the two approaches are intertranslatable. 

In Section 2 we define the two-way alternation-free modal $\mu$-calculus. Section 3 is devoted to introducing the proof system, after which in Section 4 we show that proofs correspond to winning strategies in a certain parity game. In Section 5 we prove soundness and completeness. The concluding Section 6 contains a short summary and some ideas for further research. 
\section{The (alternation-free) two-way modal $\mu$-calculus}
For the rest of this paper we fix the countably infinite sets $\mathsf{P}$ of \emph{propositional variables} and $\mathsf{D}$ of \emph{actions}. Since we want our modal logic to be \emph{two-way}, we define an involution operation $\breve \cdot : \mathsf{D} \rightarrow \mathsf{D}$ such that for every $a \in \mathsf{D}$ it holds that $\breve a \not= a$ and $\breve{\breve{a}} = a$.
We work in negation normal form, where the language $\mathcal{L}_{2\mu}$ of the \emph{two-way modal $\mu$-calculus} is generated by the following grammar:
\[
\varphi ::= p \mid \overline p \mid \varphi \lor \psi \mid \varphi \land \psi \mid \ld{a} \varphi \mid \lb{a} \varphi \mid \mu x \varphi \mid \nu x \varphi
\]
where $p, x \in \mathsf{P}$, $a \in \mathsf{D}$ and in the formation of $\eta x \varphi$ ($\eta \in \{\mu, \nu\})$ the formula $\overline x$ does not occur in $\varphi$. The language $\mathcal{L}_{2 \mu}$ expresses $\top$ and $\bot$, \emph{e.g.} as $\nu x .x$ and $\mu x .x$. For the reader familiar with the ordinary modal $\mu$-calculus, note that the only distinctive feauture of $\mathcal{L}_{2 \mu}$ is the assumed involution operator on $\mathsf{D}$. 

We use standard terminology for the binding of variables by a fixpoint operator $\eta$. In particular, we write $FV(\varphi)$ for the set of variables $x \in \mathsf{P}$ that occur freely in $\varphi$ and $BV(\varphi)$ for the set of those that are bound by some fixpoint operator. Note that for every $\overline x$ occurring in $\varphi$, we have $x \in FV(\varphi)$. For technical convenience, we assume that each formula $\varphi$ is \emph{tidy}, \emph{i.e.} that $FV(\varphi) \cap BV(\varphi) = \emptyset$. The \emph{unfolding} of a formula $\psi = \eta x \varphi$ is the formula $\varphi[\psi/x]$, obtained by substituting every free occurrence of $x$ in $\varphi$ by $\psi$. No free variables of $\psi$ are captured by this procedure, because $FV(\psi) \cap BV(\varphi) \subseteq FV(\varphi) \cap BV(\varphi) = \emptyset$. The \emph{closure} of a formula $\xi \in \mathcal{L}_{2 \mu}$ is the least set $\mathsf{Clos}(\xi) \subseteq \mathcal{L}_{2 \mu}$ such that $\xi \in \mathsf{Clos}(\xi)$ and:
\begin{enumerate}[label = (\roman*), align= left]
	\item $\varphi \circ \psi \in \mathsf{Clos}(\xi)$ implies $\varphi, \psi \in \mathsf{Clos}(\xi)$ for each $\circ \in \{\lor, \land\}$;
	\item $\triangle \varphi \in \mathsf{Clos}(\xi)$ implies $\varphi \in \mathsf{Clos}(\xi)$ for every $\triangle \in \{\langle a \rangle, [a] \mid a \in \mathsf{D}\}$;
	\item $\eta x \varphi \in \mathsf{Clos}(\xi)$ implies $\varphi[\eta x \varphi /x] \in \mathsf{Clos}(\xi)$ for every $\eta \in \{\mu, \nu\}$.  
\end{enumerate}
It is well known that $\mathsf{Clos}(\xi)$ is always finite and that all formulas in $\mathsf{Clos}(\xi)$ are tidy if $\xi$ is so (see \emph{e.g.} \cite{venema2020lectures}). 

Formulas of $\mathcal{L}_{2\mu}$ are interpreted in \emph{Kripke models} $\mathbb{S} = (S, (R_a)_{a \in \mathsf{D}}, V)$, where $S$ is a set of \emph{states}, for each $a \in \mathsf{D}$ we have an \emph{accessibility relation} $R_a \subseteq S \times S$, and $V : \mathsf{P} \rightarrow \mathcal{P}(S)$ is a \emph{valuation function}. We assume that each model is \emph{regular}, \emph{i.e.} that $R_a$ is the converse relation of $R_{\breve{a}}$ for every $a \in \mathsf{D}$. Recall that the converse relation of a relation $R$ consists of those $(y, x)$ such that $(x, y) \in R$. 

We set $R_a[s] := \{t \in S : sR_at\}$ and let $\mathbb{S}[x \mapsto X]$ be the model obtained from $\mathbb{S}$ by replacing the valuation function $V$ by $V[x \mapsto X]$, defined by setting $V[x \mapsto X](x) = X$ and $V[x \mapsto X](p) = V(p)$ for every $p \not= x$. The \emph{meaning} $\llbracket \varphi \rrbracket^\mathbb{S} \subseteq S$ of a formula $\xi \in \mathcal{L}_{2\mu}$ in $\mathbb{S}$ is inductively on the complexity of $\xi$: 
{\small
\begin{alignat*}{4}
&\llbracket p \rrbracket^\mathbb{S} &&:= V(p) &&\llbracket \overline p \rrbracket^\mathbb{S} &&:= S \setminus V(p) \\
&\llbracket \varphi \lor \psi \rrbracket^\mathbb{S} &&:= \llbracket \varphi \rrbracket^\mathbb{S} \cup \llbracket \psi \rrbracket^\mathbb{S}  && \llbracket \varphi \land \psi \rrbracket^\mathbb{S} &&:= \llbracket \varphi \rrbracket^\mathbb{S} \cap \llbracket \psi \rrbracket^\mathbb{S} \\
&\llbracket \langle a \rangle \varphi \rrbracket^\mathbb{S} &&:= \{s \in S \mid R_a[s] \cap \llbracket \varphi \rrbracket^\mathbb{S} \not= \emptyset\}  && \llbracket [a] \varphi \rrbracket^\mathbb{S} &&:= \{s \in S \mid R_a[s] \subseteq \llbracket \varphi \rrbracket^\mathbb{S}\} \\
&\llbracket \mu x \varphi \rrbracket^\mathbb{S} &&:= \bigcap\{X \subseteq S \mid \llbracket \varphi \rrbracket^{\mathbb{S}[x \mapsto X]} \subseteq X\} \ \ \  && \llbracket \nu x \varphi \rrbracket^\mathbb{S} &&:= \bigcup\{X \subseteq S \mid X \subseteq \llbracket \varphi \rrbracket^{\mathbb{S}[x \mapsto X]}\} 
\end{alignat*}
}We will use the definable (see~\cite{venema2020lectures}) negation operator $\overline{\cdot}$ on $\mathcal{L}_{2 \mu}$, for which it holds that $\llbracket \overline{\xi} \rrbracket^\mathbb{S} = S \setminus \llbracket \xi \rrbracket^\mathbb{S}$. 

In this paper we shall only work with an alternative, equivalent, definition of the semantics, given by the \emph{evaluation game} $\mathcal{E}(\xi, \mathbb{S})$. We refer the reader to the appendix below for the basic notions of (parity) games. The game $\mathcal{E}(\xi, \mathbb{S})$ is played on the board $\mathsf{Clos}(\xi) \times S$, and its ownership function and admissible moves are given in the following table. 
	\begin{center}
		\begin{tabular}{|c|c|c|}
			\hline
			Position & Owner & Admissible moves \\
			\hline
			$(p, s), s \in V(p)$ & $\forall$ & $\emptyset$\\
			$(p, s), s \notin V(p)$ & $\exists$ & $\emptyset$\\
			$(\varphi \lor \psi, s)$ & $\exists$ & $\{(\varphi, s), (\psi, s)\}$ \\
			$(\varphi \land \psi, s)$ & $\forall$ & $\{(\varphi, s), (\psi, s)\}$ \\
			$(\langle a \rangle \varphi, s)$ & $\exists$ & $\{\varphi\} \times R_a[s]$ \\	
			$([a]\varphi, s)$ & $\forall$ & $\{\varphi\} \times R_a[s]$ \\	
			$(\eta x \varphi, s)$ & $-$ & $\{(\varphi [\eta x\varphi /  x], s)\}$ \\
			\hline
		\end{tabular}
	\end{center}
The following proposition is standard in the literature on the modal $\mu$-calculus. See~\cite[Proposition 6.7]{kupke2020size} for a proof. 
\begin{proposition}
For every infinite $\mathcal{E}(\xi, \mathbb{S})$-match $\mathcal{M} = (\varphi_n, s_n)_{n \in \omega}$, there is a unique fixpoint formula $\eta x \chi$ which occurs infinitely often in $\mathcal{M}$ and is a subformula of $\varphi_n$ for cofinitely many $n$.
\end{proposition}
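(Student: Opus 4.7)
The plan is to exploit the finiteness of $\mathsf{Clos}(\xi)$ together with a careful analysis of how the two types of moves in $\mathcal{E}(\xi,\mathbb{S})$ affect the syntactic shape of the first coordinate. Since $\mathsf{Clos}(\xi)$ is finite, the set $I \subseteq \mathsf{Clos}(\xi)$ of formulas occurring infinitely often in $\mathcal{M}$ is non-empty, and there is an $N$ such that $\varphi_n \in I$ for all $n \geq N$. Thus it suffices to analyse the tail $(\varphi_n)_{n \geq N}$ and locate the required fixpoint formula inside $I$.

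First I would classify the moves: inspecting the table, each transition $(\varphi_n,s_n) \to (\varphi_{n+1},s_{n+1})$ is either a \emph{descent move} (Boolean or modal case), in which $\varphi_{n+1}$ is a proper syntactic subformula of $\varphi_n$, or an \emph{unfolding move}, in which $\varphi_n = \eta x\chi$ and $\varphi_{n+1} = \chi[\eta x\chi/x]$ (so $\varphi_n$ is itself a subformula of $\varphi_{n+1}$). Because the proper-subformula relation is well-founded, the tail cannot consist eventually of descent moves only, so unfolding moves occur infinitely often. Hence the set $F \subseteq I$ of fixpoint formulas that are unfolded infinitely often is non-empty.

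Next I would pick the \emph{syntactically largest} element $\eta x\chi \in F$, i.e.\ the one that has every other $\psi \in F$ as a subformula; I would verify that this choice is well-defined by showing that any two members of $F$ are subformula-comparable. The key lemma is that between two consecutive unfoldings of members of $F$, all intervening formulas are subformulas of the unfolding of the earlier one; combining this with the fact that $\eta x \chi$ is unfolded cofinally, one obtains that for cofinitely many $n \geq N$ the formula $\varphi_n$ lies in $\mathsf{Clos}(\chi[\eta x\chi/x])$ and has $\eta x\chi$ as a (syntactic) subformula. This gives both the subformula clause and, via the fact that $\eta x\chi$ itself is visited whenever it is unfolded, the infinite-occurrence clause. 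Uniqueness then follows: another candidate $\eta' y \chi'$ would have to be a subformula of $\eta x\chi$ (being a subformula of cofinitely many $\varphi_n$, hence of $\eta x\chi$ eventually), and symmetrically $\eta x\chi$ a subformula of $\eta' y\chi'$, forcing equality.

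The main obstacle is the step that identifies the ``largest'' element of $F$ and shows that it is a subformula of cofinitely many $\varphi_n$. Subtly, unfolding a fixpoint substitutes copies of itself into its body, so the naive subformula order on $\mathcal{L}_{2\mu}$ does not behave monotonically along the match; one has to argue in terms of the fixed syntactic skeleton of $\xi$ (for example by tracking, for each $\varphi_n$, the fixpoint subformulas of $\xi$ that generated it). Once this bookkeeping is set up, the rest of the argument is a straightforward case analysis on the two kinds of moves.
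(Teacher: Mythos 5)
Your setup is fine and follows the standard route that the paper itself only invokes by reference (the paper gives no proof, citing \cite[Proposition 6.7]{kupke2020size}): finiteness of $\mathsf{Clos}(\xi)$, the split into descent moves and unfolding moves, well-foundedness of the subformula order giving infinitely many unfoldings, and hence a non-empty set $F$ of fixpoint formulas unfolded infinitely often. The genuine gap is the extremal choice you then make in $F$: you pick the subformula-\emph{largest} element, but the witness required by the proposition is the subformula-\emph{smallest} one. Concretely, let $\xi = \nu x\, \mu y\, ((p \land \ld{a} x) \lor \ld{a} y)$, whose unfolding is $\theta := \mu y\, ((p \land \ld{a}\xi) \lor \ld{a} y)$, and consider (in a one-point reflexive model satisfying $p$) the match cycling through $(\xi, s) \cdot (\theta, s) \cdot ((p \land \ld{a}\xi) \lor \ld{a}\theta, s) \cdot (p \land \ld{a}\xi, s) \cdot (\ld{a}\xi, s) \cdot (\xi, s) \cdots$. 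Here $F = \{\xi, \theta\}$, and your ``largest'' element is $\theta$, since $\xi$ is a subformula of $\theta$ but not conversely. Yet $\theta$ fails the cofiniteness clause: $\xi$ recurs infinitely often and $\theta$ is not a subformula of $\xi$ (every subformula of $\xi$ of the right shape has $x$ free where $\theta$ carries the substituted copy of $\xi$). The correct witness is $\xi$, the minimal element of $F$; note also that $\xi$ is a $\nu$-formula while your candidate $\theta$ is a $\mu$-formula, so your choice would even assign the match to the wrong winner. The underlying confusion is that unfolding substitutes the outer fixpoint formula into its own body, so the \emph{outermost} fixpoint corresponds to the subformula-\emph{smallest} formula of the cluster, not the largest.

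This error surfaces in your ``key lemma'' step: from ``all intervening formulas are subformulas of $\chi[\eta x\chi/x]$'' (which is true, as between consecutive unfoldings all moves are descent moves) you infer that they ``have $\eta x\chi$ as a syntactic subformula''. That inference is invalid: a subformula of $\chi[\eta x\chi/x]$ is either of the form $\sigma[\eta x\chi/x]$ with $x$ free in $\sigma$ (and then it does contain $\eta x\chi$), or else a subformula of $\eta x\chi$ itself or of $\chi$ without free $x$ (and then it need not); in the example above, $\ld{a}\xi$ and $\xi$ are subformulas of the unfolding of $\theta$ but do not contain $\theta$. The repair: take $\eta x\chi \in F$ of \emph{minimal} size and argue for the cofiniteness clause as follows. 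New subformulas can only appear along the match at unfolding steps, and every subformula newly created by unfolding $\varphi_k$ contains $\varphi_k$ as a subformula. Hence, if the minimal $\eta x\chi$ failed to be a subformula of some infinitely recurring $\psi$, then on infinitely many pairwise disjoint segments leading from an occurrence of $\psi$ to the next occurrence of $\eta x\chi$, some fixpoint formula that is a \emph{proper} subformula of $\eta x\chi$ would be unfolded; by finiteness of $\mathsf{Clos}(\xi)$ one such formula lies in $F$, contradicting minimality. Your uniqueness argument via antisymmetry of the subformula order is correct and carries over unchanged.
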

The winner of an infinite match $\mathcal{E}(\xi, \mathbb{S})$-match is $\exists$ if in the previous proposition $\eta = \nu$, and $\forall$ if $\eta = \mu$. It is well known that $\mathcal{E}(\xi, \mathbb{S})$ can be realised as a parity game by defining a suitable priority function on $\mathsf{Clos}(\xi) \times S$ (we again refer the reader to~\cite{kupke2020size} for a detailed proof of this fact). Because of this we may, by Theorem \ref{thm:posdet} in Appendix \ref{sec:apppar}, assume that winning strategies are optimal and positional. Finally, we state the known fact that the two approaches provide the same meaning to formulas. For every $\varphi \in \mathsf{Clos}(\xi)$: $(\varphi, s) \in \text{Win}_\exists(\mathcal{E}(\xi, \mathbb{S}))@(\varphi, s)$ if and only if $s \in \llbracket \varphi \rrbracket^\mathbb{S}$. If either is side of the bi-implication holds, we say that $\varphi$ \emph{is satisfied} in $\mathbb{S}$ at $s$ and write $\mathbb{S}, s \Vdash \varphi$. 

In this paper we are concerned with a fragment of $\mathcal{L}_{2 \mu}$ containing only those formulas $\xi$ which are \emph{alternation free}, \emph{i.e.} such that for every subformula $\eta x \varphi$ of $\xi$ it holds that no free occurrence of $x$ in $\varphi$ is in the scope of an $\overline \eta$-operator in $\varphi$ (where $\overline \eta$ denotes the opposite fixed point operator of $\eta$). This fragment is called \emph{the alternation-free two-way modal $\mu$-calculus} and denoted by $\mathcal{L}_{2 \mu}^{\textit{af}}$. We close this section by stating some typical properties of the alternation-free fragment. For $\eta \in \{\mu, \nu\}$ we use the term \emph{$\eta$-formula} for a formula of the form $\eta x \varphi$. 
\begin{proposition}
\label{prop:propertiesaf}
Let $\xi \in \mathcal{L}^{\textit{af}}_{2 \mu}$ be an alternation-free formula. Then:
\begin{itemize}[label = $\bullet$]
	\item Every formula $\varphi \in \mathsf{Clos}(\xi)$ is alternation free. 
	\item The negation $\overline \xi$ is alternation free. 
	\item An infinite $\mathcal{E}(\xi, \mathbb{S})$-match is won by $\exists$ precisely if it contains infinitely many $\nu$-formulas, and by $\forall$ precisely if it contains infinitely many $\mu$-formulas. 
\end{itemize}
\end{proposition}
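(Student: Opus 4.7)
The plan is to treat the three items separately. For the first, I would proceed by induction on the generation of $\mathsf{Clos}(\xi)$. The base case is trivial, and the Boolean and modal closure rules (i), (ii) preserve alternation-freeness because the condition is automatically inherited by any syntactic subformula. The interesting case is the unfolding rule (iii): given alternation-free $\eta x \varphi \in \mathsf{Clos}(\xi)$, one must show $\varphi[\eta x \varphi / x]$ is again alternation-free. Every fixpoint subformula of the unfolding is either a substituted copy of $\eta x \varphi$ (alternation-free by hypothesis) or the image of a fixpoint subformula of $\varphi$ under the substitution; in the latter case alternation-freeness is inherited from $\eta x \varphi$, using the assumption that the free $x$-occurrences of $\varphi$ are not in the scope of any $\overline \eta$-operator of $\varphi$, so substitution introduces no new bad scope pattern.

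For the second item, the negation operator $\overline \cdot$ is defined by De Morgan equivalences, swapping $\mu$ with $\nu$ while otherwise preserving the connective structure. A structural induction on $\xi$ then shows that the scope and binding structure of $\overline \xi$ mirrors that of $\xi$ with binder types exchanged, and hence the alternation-freeness condition for $\xi$ transfers verbatim to $\overline \xi$.

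For the third item I appeal to the earlier proposition to obtain the unique fixpoint formula $\eta x \chi$ occurring infinitely often and being a subformula of $\varphi_n$ cofinitely; by definition the match is won by $\exists$ iff $\eta = \nu$. One direction is immediate: the positions with first coordinate $\eta x \chi$ already witness infinitely many $\eta$-formulas. For the converse I would argue that if some $\nu$-formula $\nu y \psi$ occurs infinitely often, then by the uniqueness clause of the proposition it suffices to show that $\nu y \psi$ is itself a subformula of $\varphi_n$ cofinitely, whence $\nu y \psi = \eta x \chi$ and $\eta = \nu$. The verification is by case analysis of the game transitions: a subformula can only be freshly introduced on a transition by unfolding a fixpoint formula whose body contains a subformula that, after substitution, becomes $\nu y \psi$. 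Here alternation-freeness is essential, as it forces the unfolded fixpoint in such a situation to itself be of type $\nu$, thereby precluding the oscillating reintroductions of $\nu y \psi$ that would otherwise be compatible with a dominating $\mu$-formula. The dual argument handles the $\forall$-case.

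I expect the main obstacle to lie in the third item, specifically in making precise the claim that alternation-freeness forbids a $\nu$-subformula from oscillating in the presence of a dominating $\mu$-subformula. This requires careful bookkeeping of how substitutions interact with scopes of nested fixpoint operators, and of which subformula occurrences of $\varphi_n$ can persist or be created as the match proceeds.
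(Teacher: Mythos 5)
Your first two items are essentially correct: alternation freeness passes to syntactic subformulas trivially, the unfolding case works because tidiness prevents variable capture while the substituted copies of $\eta x \varphi$ are covered by the induction hypothesis, and negation swaps binder types uniformly so that the condition transfers. Note that the paper itself gives no proof of this proposition at all; it is stated as a list of known ``typical properties'' of the alternation-free fragment, so your attempt can only be measured against the standard arguments in the literature.

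The third item, however, contains a genuine gap: the reduction you propose rests on a false intermediate claim. You want to take an \emph{arbitrary} $\nu$-formula $\nu y \psi$ occurring infinitely often and show that it is a subformula of $\varphi_n$ for cofinitely many $n$, so that uniqueness yields $\nu y \psi = \eta x \chi$. But by that very uniqueness clause, at most one fixpoint formula occurring infinitely often can have the cofinite-subformula property, whereas a match may contain two distinct $\nu$-formulas occurring infinitely often. Concretely, take $\xi = \nu x \langle a \rangle \nu y (x \land \langle b \rangle y)$, which is alternation free, and let $B := \nu y (\xi \land \langle b \rangle y) \in \mathsf{Clos}(\xi)$. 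In a model with a single state carrying $a$- and $b$-loops (and their converses), a match can cycle $\xi \to \langle a \rangle B \to B \to \xi \land \langle b \rangle B \to \xi \to \cdots$, so that both $\xi$ and $B$ occur infinitely often; here $\xi$ is the dominating formula of Proposition 1, while $B$ is not a subformula of $\xi$ and hence not a subformula of $\varphi_n$ cofinitely. Applied to $\nu y \psi = B$, your plan sets out to prove something false.

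The gap is repairable using your own regeneration observation, but it must be aimed differently: one should \emph{find some} $\nu$-formula with the cofinite-subformula property rather than establish it for the given one. Indeed, if $B$ is not a subformula of $\varphi_k$ but is one of $\varphi_{k+1}$, then that step unfolds some $C = \kappa z \gamma$ and $B = \theta[C/z]$ for a subformula $\theta$ of $\gamma$ with $z$ free in $\theta$; since $\theta$ must share its main connective $\nu$ with $B$, that free $z$ lies under a $\nu$-operator, and alternation freeness forces $\kappa = \nu$ --- this is your key point, and it is sound. But one needs the further observation that $C$ is then a \emph{proper subformula} of $B$. Now, if the dominating formula were a $\mu$-formula, every $\nu$-formula occurring infinitely often would be regenerated infinitely often (it cannot be a cofinite subformula, by uniqueness), so by pigeonhole on the finite closure one obtains a strictly size-decreasing chain of $\nu$-formulas each occurring infinitely often; the last element of this chain must be a cofinite subformula, hence equal to the dominating $\mu$-formula --- a contradiction, since their binders differ. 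Without this descending-chain argument (or an equivalent one, such as the fact that mutually closure-dependent fixpoint formulas of an alternation-free formula share the same binder type), the transition case analysis you sketch does not by itself yield the statement.
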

\section{The proof system}
We will call a set $\Sigma$ of formulas \emph{negation-closed} if for every $\xi \in \Sigma$ it holds that $\overline \xi \in \Sigma$ and $\mathsf{Clos}(\xi) \subseteq \Sigma$. For the remainder of this paper we fix a finite and negation-closed set $\Sigma$ of $\mathcal{L}_{2 \mu}^\textit{af}$-formulas. For reasons of technical convenience, we will assume that every formula is drawn from $\Sigma$. This does not restrict the scope of our results, as any formula is contained in some finite negation-closed set. 
\subsection{Sequents}
\subsubsection{Syntax}
Inspired by~\cite{MartiVenema21}, we annotate formulas by a single bit of information.
\begin{definition}
An \emph{annotated formula} is a formula with an annotation in $\{\u, \f\}$.
\end{definition}
The letters $b, c, d, \ldots$ are used as variables ranging over the annotations $\u$ and $\f$. An annotated formula $\varphi^b$ is said to be \emph{out of focus} if $b = \u$, and \emph{in focus} if $b = \f$. The focus annotations will keep track of so-called \emph{traces} on paths through proofs. Roughly, a trace on a path is a sequence of formulas, such that the $i$-th formula occurs in the $i$-th sequent on the path, and the $i + 1$-th formula `comes from' the $i$-th formula in a way which we will define later. In Section \ref{sec:proofsearchgame} we will construct a game in which the winning strategies of one player correspond precisely to the proofs in our proof system. The focus mechanism enables us to formulate this game as a parity game. This is essentially also the approach taken in~\cite{MartiVenema21}. 
 
Where traces usually only moves upwards in a proof, the backwards modalities of our language will be enable them to go downwards as well. We will handle this in our proof system by further enriching our sequents with the following additional information. 
\begin{definition}
For any two formulas $\varphi, \psi$, there is a \emph{trace atom} $\varphi \leadsto \psi$ and a \emph{negated trace atom} $\varphi \not \leadsto \psi$. 
\end{definition}
The idea for trace atoms will become more clear later, but for now one can think of $\varphi \leadsto \psi$ as expressing that there is some kind of trace going from $\varphi$ to $\psi$, and of $\varphi \not \leadsto \psi$ as its negation. Finally, our sequents are built from the above three entities. 
\begin{definition}
A \emph{sequent} is a finite set consisting of annotated formulas, trace atoms, and negated trace atoms.
\end{definition}
Whenever we want to refer to general elements of a sequent $\Gamma$, without specifying whether we mean annotated formulas or (negated) trace atoms, we will use the capital letters $A, B, C, \ldots$. 
\subsubsection{Semantics} We will now define the semantics of sequents. Unlike annotations, which do not affect the semantics but only serve as bookkeeping devices, the trace atoms have a well-defined interpretation. We will work with a refinement of the usual satisfaction relation that is defined with respect to a strategy for $\forall$ in the evaluation game. Most of the time, this strategy will be both \emph{optimal} and \emph{positional} (see Appendix \ref{sec:apppar} for the precise definition of these terms). Because we will frequently need to mention such optimal positional strategies, we will refer to them by the abbreviation \emph{ops}. We first define the interpretation of annotated formulas. Note that the focus annotations play no role in this definition. 
\begin{definition}
Let $\mathbb{S}$ be a model, let $f$ be an ops for $\forall$ in $\mathcal{E}@(\bigwedge \Sigma, \mathbb{S})$ and let $\varphi^b$ be an annotated formula. We write $\mathbb{S}, s \Vdash_f \varphi^b$ if $f$ is \emph{not} winning for $\forall$ at $(\varphi, s)$.
\end{definition}
The following proposition, which is an immediate consequence of Theorem \ref{thm:posdet} of the appendix, relates $\Vdash_f$ to the usual satisfaction relation $\Vdash$.
\begin{proposition}
$\mathbb{S}, s \Vdash \varphi$ iff for every ops $f$ for $\forall$ in $\mathcal{E}(\bigwedge \Sigma, \mathbb{S})$: $\mathbb{S}, s \Vdash_f \varphi^b$.
\end{proposition}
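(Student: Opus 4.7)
The plan is to reduce the biconditional to positional determinacy of the evaluation parity game (Theorem \ref{thm:posdet}), exploiting two facts already recorded in the excerpt: first, that $\mathbb{S}, s \Vdash \varphi$ is equivalent to $(\varphi, s) \in \mathrm{Win}_\exists(\mathcal{E}(\xi, \mathbb{S}))$ for the appropriate $\xi$ (here I can take $\xi = \bigwedge \Sigma$, since the standing assumption is that every formula sits inside $\Sigma$, so $\varphi \in \mathsf{Clos}(\bigwedge \Sigma)$ and the subgame starting at $(\varphi, s)$ coincides with $\mathcal{E}(\varphi, \mathbb{S})@(\varphi, s)$); and second, that an \emph{ops} for $\forall$ is, by definition, a positional strategy that is winning at \emph{every} position of $\mathrm{Win}_\forall$. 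Since the focus annotation $b$ plays no role in $\Vdash_f$, it can be ignored throughout.

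For the forward direction I would argue contrapositively. Fix any ops $f$ for $\forall$ and suppose $\mathbb{S}, s \not\Vdash_f \varphi^b$. By definition this says $f$ is winning for $\forall$ at $(\varphi, s)$, which in particular places $(\varphi, s)$ in $\mathrm{Win}_\forall(\mathcal{E}(\bigwedge \Sigma, \mathbb{S}))$. Determinacy (part of Theorem \ref{thm:posdet}) then yields $(\varphi, s) \notin \mathrm{Win}_\exists$, i.e.\ $\mathbb{S}, s \not\Vdash \varphi$.

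For the backward direction, again contrapose: assume $\mathbb{S}, s \not\Vdash \varphi$, so $(\varphi, s) \notin \mathrm{Win}_\exists$. By determinacy $(\varphi, s) \in \mathrm{Win}_\forall$. Positional determinacy provides a positional strategy $g$ for $\forall$ which is winning on all of $\mathrm{Win}_\forall$; extending $g$ arbitrarily on $\mathrm{Win}_\exists$ gives an ops $f$, and since $(\varphi, s) \in \mathrm{Win}_\forall$ this $f$ is winning for $\forall$ at $(\varphi, s)$. Hence $\mathbb{S}, s \not\Vdash_f \varphi^b$ for this particular ops, contradicting the universal quantification on the right-hand side.

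There is no real obstacle here; the whole argument is bookkeeping around Theorem \ref{thm:posdet}. The only point worth being careful about is the mild mismatch between $\mathcal{E}(\varphi,\mathbb{S})$ (which is what the equivalence $\mathbb{S}, s \Vdash \varphi \iff (\varphi,s)\in\mathrm{Win}_\exists$ was stated for) and $\mathcal{E}(\bigwedge\Sigma,\mathbb{S})$ (the game used in the definition of $\Vdash_f$). One needs to observe that from position $(\varphi, s)$ onwards the two games have the same board, ownership, admissible moves, and winning condition, so winning regions at $(\varphi, s)$ coincide; after that observation the proposition is indeed immediate.
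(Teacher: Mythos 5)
Your proposal is correct and takes essentially the same route as the paper, which gives no separate argument but states the proposition as an immediate consequence of Theorem \ref{thm:posdet}; your two contrapositive directions are exactly the bookkeeping behind that remark (an ops failing at $(\varphi,s)$ puts the position in $\textnormal{Win}_\forall$, and conversely the optimal positional strategy for $\forall$ witnesses failure whenever $(\varphi,s) \notin \textnormal{Win}_\exists$). The only cosmetic point is that your forward direction needs just the trivial disjointness of the two winning regions rather than determinacy itself.
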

The semantics of trace atoms is also given relative to an ops for $\forall$ in the game $\mathcal{E}(\bigwedge \Sigma, \mathbb{S})$ (in the following often abbreviated to $\mathcal{E}$). 
\begin{definition}
Given an ops $f$ for $\forall$ in $\mathcal{E}$, we say that $\varphi \leadsto \psi$ is \emph{satisfied} in $\mathbb{S}$ at $s$ with respect to $f$ (and write $\mathbb{S}, s \Vdash_f \varphi \leadsto \psi$) if there is an $f$-guided match
\[
 (\varphi, s) = (\varphi_0, s_0) \cdot (\varphi_1, s_1) \cdots (\varphi_n, s_n) = (\psi, s) \ \ \ (n \geq 0)
\]
such that for no $i < n$ the formula $\varphi_i$ is a $\mu$-formula. We say that $\mathbb{S}$ \emph{satisfies} $\varphi \not \leadsto \psi$ at $s$ with respect to $f$ (and write $\mathbb{S}, s \Vdash_f \varphi \not \leadsto \psi$) iff $\mathbb{S}, s \not \Vdash_f \varphi \leadsto \psi$.
\end{definition}
The idea behind the satisfaction of a trace atom $\varphi \leadsto \psi$ at a state $s$ is that $\exists$ can take the match from $(\varphi, s)$ to $(\psi, s)$ without passing through a $\mu$-formula. This is good for the player $\exists$. For instance, if $\varphi \leadsto \psi$ and $\psi \leadsto \varphi$ are satisfied at $s$ with respect to $f$ for some $\varphi \not= \psi$, then $f$ is necessarily losing for $\forall$ at the position $(\varphi, s)$. We will later relate trace atoms to traces in infinitary proofs.

We interpret sequents disjunctively, that is: $\mathbb{S}, s \Vdash_f \Gamma$ whenever $\mathbb{S}, s \Vdash_f A$ for some $A \in \Gamma$. The sequent $\Gamma$ is said to be \emph{valid} whenever $\mathbb{S}, s \Vdash_f \Gamma$ for every model $\mathbb{S}$, state $s$ of $\mathbb{S}$, and ops $f$ for $\forall$ in $\mathcal{E}$. 
\begin{remark}
There is another way in which one could interpret sequents, which corresponds to what one might call \emph{strong validity}, and which the reader should note is different from our notion of validity. Spelling it out, we say that $\Gamma$ is \emph{strongly valid} if for every model $\mathbb{S}$ and state $s$ there is an $A$ in $\Gamma$ that such that for every ops $f$ for $\forall$ in $\mathcal{E}$ it holds that $\mathbb{S}, s \Vdash_f A$. While these two notions coincide for sequents containing only annotated formulas, the sequent given by $\{\varphi \land \psi \leadsto \varphi, \varphi \land \psi \leadsto \psi$\} shows that they do not in general. 
\end{remark}
We finish this subsection by defining three operations on sequents that, respectively, extract the formulas contained annotated in some sequent, take all annotated formulas out of focus, and put all formulas into focus.
\begin{align*}
&\Gamma^- &&:= \{\chi \mid \chi^{\a}  \in \Gamma \text{ for some $b \in \{\u, \f\}$}\}, \\
&\Gamma^\u &&:= \{\varphi \leadsto \psi \mid \varphi \leadsto \psi \in \Gamma\} \cup \{\varphi \not \leadsto \psi \mid \varphi \not \leadsto \psi \in \Gamma\} \cup \{\chi^\u \mid \chi  \in \Gamma^-\}, \\
&\Gamma^\f &&:= \{\varphi \leadsto \psi \mid \varphi \leadsto \psi \in \Gamma\} \cup \{\varphi \not \leadsto \psi \mid \varphi \not \leadsto \psi \in \Gamma\} \cup \{\chi^\f \mid \chi  \in \Gamma^-\}.
\end{align*}
\subsection{Proofs}
In this subsection we give the rules of our proof system. Because the rule for modalities is quite involved, its details are given in a separate definition.
\begin{definition}
\label{defn:jump}
Let $\Gamma$ be a sequent and let $[a]\varphi^{b}$ be an annotated formula. The \emph{jump} $\Gamma^{[a] \varphi^{b}}$ of $\Gamma$ with respect to $[a] \varphi^{b}$ consists of: 
\begin{enumerate}
	\item \begin{enumerate}
		\item $\varphi^{s([a]\varphi, \Gamma)}$;
		\item $\psi^{s(\langle a \rangle \psi, \Gamma)}$ for every $\langle a \rangle \psi^{c} \in \Gamma$;
		\item $[\breve{a}] \chi^{\u}$ for every $\chi^{d} \in \Gamma$ such that $[\breve{a}]\chi \in \Sigma$; \\
	\end{enumerate}
	\item \begin{enumerate}
		\item $\varphi \leadsto \langle \breve{a} \rangle \chi$ for every $[a]\varphi \leadsto \chi \in \Gamma$ \hspace{0.0000001pt} such that $\langle \breve{a} \rangle  \chi \in \Sigma$;
		\item $\langle \breve a \rangle \chi \not \leadsto \varphi$ for every $\chi \not \leadsto [a]\varphi \in \Gamma$  \hspace{0.0000000001pt}  such that $\langle \breve a \rangle \chi \in \Sigma$;
		\item $\psi \leadsto \langle \breve{a} \rangle  \chi$ for every $\langle a \rangle  \psi \leadsto \chi \in \Gamma$ such that $\langle \breve{a} \rangle  \chi \in \Sigma$;
		\item $\langle \breve a \rangle \chi \not \leadsto \psi$ for every $\chi \not \leadsto \langle a \rangle \psi \in \Gamma$  such that $\langle \breve a \rangle \chi \in \Sigma$, 
		\end{enumerate}
\end{enumerate}
where $s(\xi, \Gamma)$ is defined by:
\[
{s(\xi, \Gamma)} = \begin{cases}
\f &\text{if } \xi^{\f} \in \Gamma, \\
\f &\text{if } \text{$\theta \not \leadsto \xi \in \Gamma$ for some $\theta^{\f} \in \Gamma$,} \\
\u &\text{otherwise.}
\end{cases}
\]
\end{definition}
Before we go on to provide the rest of the proof system, we will give some intuition for the modal rule, by proving the lemma below. This lemma essentially expresses that the modal rule is sound. Since the annotations play no role in the soundness of an individual rule, we suppress the annotations in the proof below for the sake of readability. Intuition for the annotations in the modal rule, and in particular for the function $s$, is given later. 
\begin{lemma}
\label{lem:soundnessmodalrule}
Given a model $\mathbb{S}$, a state $s$ of $\mathbb{S}$, and an ops $f$ for $\forall$ in $\mathcal{E}$ such that $\mathbb{S}, s \not \Vdash_f [a]\varphi^b, \Gamma$, there is an $a$-successor $t$ of $s$, such that $\mathbb{S}, t \not \Vdash_f \Gamma^{[a]\varphi^b}$.
\end{lemma}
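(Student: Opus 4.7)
The plan is to choose $t$ as the $a$-successor of $s$ dictated by $f$ at position $([a]\varphi, s)$, and then verify that each element of $\Gamma^{[a]\varphi^b}$ is not satisfied at $t$ under $f$ by a case analysis matching Definition~\ref{defn:jump}. Since annotations do not affect satisfaction, the focus bits can indeed be ignored. Concretely, from the hypothesis $\mathbb{S}, s \not\Vdash_f [a]\varphi$ we know $f$ is winning for $\forall$ at $([a]\varphi,s)$; since $f$ is optimal and positional, $t := f([a]\varphi,s)$ is an $a$-successor of $s$ such that $f$ is winning at $(\varphi, t)$. The key fact I will use repeatedly is regularity of the model: $t R_{\breve a} s$ because $s R_a t$.

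For the annotated-formula components of $\Gamma^{[a]\varphi^b}$ I would argue separately: clause (1a) is immediate from the choice of $t$; for (1b), a position $(\langle a\rangle \psi, s)$ is owned by $\exists$, so $f$ being winning there forces every move, in particular the move to $(\psi,t)$, to be winning for $f$; for (1c), the position $([\breve a]\chi, t)$ is owned by $\forall$, and since $s$ is a $\breve a$-successor of $t$ with $f$ winning at $(\chi,s)$, an optimal positional $f$ is already winning at $([\breve a]\chi, t)$ because such a position is in $\mathrm{Win}_\forall$ exactly when it has at least one successor in $\mathrm{Win}_\forall$.

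The four trace-atom clauses are handled by prepending and/or appending a single modal move to a candidate $f$-guided witness path, noting that modal formulas are not $\mu$-formulas and therefore their appearance on a path does not violate the ``no $\mu$-formula before the last position'' condition. For (2a), suppose toward contradiction that $(\varphi,t)\cdots(\langle \breve a\rangle\chi,t)$ is an $f$-guided $\mu$-free witness; prepend the $\forall$-move $([a]\varphi,s)\to(\varphi,t)$ (which is precisely $f$'s positional choice) and append the $\exists$-move $(\langle \breve a\rangle\chi,t)\to(\chi,s)$, obtaining an $f$-guided $\mu$-free match from $([a]\varphi,s)$ to $(\chi,s)$, contradicting $\mathbb{S},s\not\Vdash_f [a]\varphi\leadsto\chi$. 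Clause (2c) is the same argument with $\langle a\rangle\psi$ in place of $[a]\varphi$, noting that the initial step $(\langle a\rangle\psi,s)\to(\psi,t)$ is a legitimate $\exists$-move. For (2b) and (2d) I would go the other direction: starting from the $f$-guided $\mu$-free witness path from $(\chi,s)$ to $([a]\varphi,s)$ (respectively to $(\langle a\rangle\psi,s)$) guaranteed by the hypothesis on the negated trace atom, prepend $(\langle \breve a\rangle\chi,t)\to(\chi,s)$ (an $\exists$-move) and append the $\forall$-move chosen by $f$, i.e.\ $([a]\varphi,s)\to(\varphi,t)$, respectively the $\exists$-move to $(\psi,t)$, producing the required $f$-guided $\mu$-free match at $t$.

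The main obstacle is not any single case but keeping the bookkeeping consistent: I need to be scrupulous about who owns which position (so that ``$f$-guided'' is applied only to $\forall$'s moves), about the direction of each trace atom (which dictates whether to prepend or append the modal step), and about the converse relation so that the chosen $t$ witnesses the correct $\breve a$-edge back to $s$. A uniform lemma abstracting ``append/prepend a modal step to an $f$-guided $\mu$-free path'' could shorten the write-up, but given the four symmetric sub-cases it is probably cleanest simply to spell out (2a) in full and remark that (2b)--(2d) are obtained by the obvious symmetry.
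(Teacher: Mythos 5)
Your proposal is correct and follows essentially the same route as the paper's proof: take $t = f([a]\varphi,s)$, handle clauses 1(a)--(c) via the game ownership and the optimality of $f$ (the last being exactly what makes 1(c) work), and handle the trace-atom clauses by prepending/appending modal moves to $f$-guided $\mu$-free witness paths, using regularity for the $\breve a$-edge. The only difference is one of presentation: the paper writes out just clause 2(d) and declares the rest similar, whereas you spell out the symmetry between the direct constructions for the negated atoms (2(b), 2(d)) and the contradiction arguments for the positive ones (2(a), 2(c)).
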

\begin{proof}
Let $t$ be the state chosen by $f([a]\varphi, s)$. We claim that $\mathbb{S}, t \not \Vdash_f \Gamma^{[a] \varphi^b}$. To start with, since $f$ is winning, we have $\mathbb{S}, t \not \Vdash_f \varphi$. Moreover, if $\langle a \rangle \psi^c$ belongs to $\Gamma$, then $\mathbb{S}, s \not \Vdash_f \langle a \rangle \psi$ and thus $\mathbb{S}, t \not \Vdash_f \psi$. Thirdly, if $\chi^d$ belongs to $\Gamma$ and $\lb{\breve{a}} \chi \in \Sigma$, then $\mathbb{S}, s \not \Vdash [\breve {a}] \chi$, whence by the optimality of $f$, we have $\mathbb{S}, t \not \Vdash_f \lb{\breve{a}} \chi$. 

The above shows all conditions under item 1. For the conditions under item 2, suppose that $\ld{\breve{a}} \chi \in \Sigma$. We only show 2(d), because the others are similar. Suppose that $\chi \not \leadsto \ld{a} \psi \in \Gamma$. Then $\mathbb{S}, s \not \Vdash_f \chi \not \leadsto \ld{a} \psi$, whence $\mathbb{S}, s \Vdash_f \chi \leadsto \ld{a} \psi$. That means that there is an $f$-guided $\mathcal{E}$-match
\[
 (\chi, s) = (\varphi_0, s_0) \cdot (\varphi_1, s_1) \cdots (\varphi_n, s_n) = (\ld{a} \psi, s) \ \ \ (n \geq 0)
\]
such that none of the $\varphi_i$'s is a $\mu$-formula. But then the $f$-guided $\mathcal{E}$-match
\[
(\ld{\breve{a}} \chi, t) \cdot (\varphi_0, s_0) \cdots (\varphi_n, s_n) \cdot (\psi, t)
\]
witnesses that $\mathbb{S}, t \not \Vdash_f \ld{\breve{a}} \chi \not \leadsto \psi$, as required. 
\end{proof}
	\begin{figure}
	\FrameSep-6pt
	\begin{framed}
	\small
	\begin{align*}
	\AxiomC{}
	\RightLabel{$\mathsf{Ax1}$}		
	\UnaryInfC{$\varphi^b, \overline \varphi^c, \Gamma$}
	\DisplayProof
	&&
	\AxiomC{}
	\RightLabel{$\mathsf{Ax2}$}		
	\UnaryInfC{$\varphi \leadsto \psi, \varphi \not \leadsto \psi, \Gamma$}
	\DisplayProof
	&&
	\AxiomC{}
	\RightLabel{$\mathsf{Ax3}$}		
	\UnaryInfC{$\varphi \leadsto \varphi, \Gamma$}
	\DisplayProof
	\end{align*}
	\begin{align*}
	\AxiomC{$(\varphi \lor \psi) \not \leadsto \varphi, (\varphi \lor \psi) \not \leadsto \psi, \varphi^{\a}, \psi^{\a}, \Gamma$}
	\RightLabel{$\mathsf{R}_\lor$}
	\UnaryInfC{$\varphi \lor \psi^{\a}, \Gamma$}
	\DisplayProof
	&&
	\AxiomC{$\varphi^{\u}, \Gamma$}
	\AxiomC{$\overline{\varphi}^{\u}, \Gamma$}
	\RightLabel{$\mathsf{cut}$}
	\BinaryInfC{$\Gamma$}
	\DisplayProof
	\end{align*}
	\begin{align*}
	\AxiomC{$(\varphi \land \psi) \not \leadsto \varphi, \varphi^{\a}, \Gamma$}
	\AxiomC{$(\varphi \land \psi) \not \leadsto \psi, \psi^{\a}, \Gamma$}
	\RightLabel{$\mathsf{R}_\land$}
	\BinaryInfC{$\varphi \land \psi^{\a}, \Gamma$}
	\DisplayProof
	&&
	\AxiomC{$\varphi[\mu x \varphi/x]^{\u}, \Gamma$}
	\RightLabel{$\mathsf{R}_{\mu}$}
	\UnaryInfC{$\mu x \varphi^{\a}, \Gamma$}
	\DisplayProof
\end{align*}
\begin{align*}
	\AxiomC{$\nu x \varphi \not \leadsto \varphi[\nu x \varphi/x],\varphi[\nu x \varphi / x] \leadsto \nu x \varphi, \varphi[\nu x \varphi/x]^{\a}, \Gamma$}
	\RightLabel{$\mathsf{R}_\nu$}
	\UnaryInfC{$\nu x \varphi^{\a}, \Gamma$}
	\DisplayProof
	&&
	\AxiomC{$\Gamma^{[a]\varphi^{\a}}$}
	\RightLabel{$\mathsf{R}_{[a]}$}
	\UnaryInfC{$[a]\varphi^{\a}, \Gamma$}
	\DisplayProof
	\end{align*}	
	\begin{align*}
	\AxiomC{$\Gamma^{\f}$}
	\RightLabel{$\mathsf{F}$}
	\UnaryInfC{$\Gamma^{\u}$}
	\DisplayProof
	&&
	\AxiomC{$\varphi \not \leadsto \psi, \psi \not \leadsto \chi, \varphi \not \leadsto \chi, \Gamma$}
	\RightLabel{$\mathsf{trans}$}
	\UnaryInfC{$\varphi \not \leadsto \psi, \psi \not \leadsto \chi, \Gamma$}
	\DisplayProof
	&&
	\AxiomC{$\varphi \leadsto \psi, \Gamma$}
	\AxiomC{$\varphi \not \leadsto \psi, \Gamma$}
	\RightLabel{$\mathsf{tc}$}
	\BinaryInfC{$\Gamma$}
	\DisplayProof
	\end{align*}
  	\end{framed}
	\caption{The proof rules of the system $\mathsf{Focus^2}$.}
	\label{fig:rules}
	\end{figure}
	The rules of the system $\mathsf{Focus^2}$ are given in Figure \ref{fig:rules}. In each rule, the annotated formulas occurring in the set $\Gamma$ are called \emph{side formulas}. Moreover, the rules in $\{\mathsf{R}_\lor, \mathsf{R}_\land, \mathsf{R}_\mu, \mathsf{R}_\nu, \mathsf{R}_{\lb{a}}\}$ have precisely one \emph{principal formula}, which by definition is the annotated formula appearing to the left of $\Gamma$ in the conclusion. Note that, due to the fact that sequents are taken to be sets, an annotated formula may at the same time be both a principal formula and a side formula.

	We will now define the relation of \emph{immediate ancestry} between formulas in the conclusion and formulas in the premisses of some arbitrary rule application. For any side formula in the conclusion of some rule, we let its \emph{immediate ancestors} be the corresponding side formulas in the premisses. For every rule except $\mathsf{R}_{[a]}$, if some formula in the conclusion is a principal formula, its \emph{immediate ancestors} are the annotated formulas occurring to the left of $\Gamma$ in the premisses. Finally, for the \emph{modal rule} $\mathsf{R}_{[a]}$, we stipulate that $\varphi^{s([a]\varphi, \Gamma)}$ is an \emph{immediate ancestor} of the principal formula $[a]\varphi^{\a}$, and that each $\psi^{s(\langle a \rangle \psi, \Gamma)}$ contained in $\Gamma^{[a] \varphi^{\a}}$ due to clause 1(b) of Definition \ref{defn:jump} is an \emph{immediate ancestor} of $\langle a \rangle \psi^{\a} \in \Gamma$. 

	As mentioned before, the purpose of the focus annotations is to keep track of \emph{traces} of formulas on branches. Usually, a trace is a sequence of formulas $(\varphi_n)_{n < \omega}$ such that each $\varphi_k$ is an immediate ancestor of $\varphi_{k+1}$. The idea is then that whenever an infinite branch has cofinitely many sequents with a formula in focus, this branch contains a trace on which infinitely many formulas are $\nu$-formulas. Disregarding the backwards modalities for now, this can be seen as follows. As long as the focus rule is not applied, any focussed formula is an immediate ancestor of some earlier focussed formula. Since the principal formula of $\mathsf{R}_\mu$ loses focus, while the principal formula of $\mathsf{R}_\nu$ preserves focus, a straightforward application of K\H{o}nig's Lemma shows that every infinite branch contains a trace with infinitely many $\nu$-formulas. We refer the reader to \cite{MartiVenema21} for more details. 
	
	Our setting is slightly more complicated, because the function $s$ in Definition \ref{defn:jump} additionally allows the focus to transfer along negated trace atoms, rather than just from a formula to one of its immediate ancestors. This is inspired by~\cite{Vardi98}, as are the conditions in the second part of Definition \ref{defn:jump}. The main idea is that, because of the backwards modalities, traces may move not only up, but also down a proof tree. To get a grip on these more complex traces, we cut them up in segments consisting of upward paths, which are the same as ordinary traces, and loops, which are captured by the negated trace atoms. This intuitive idea will become explicit in the proof of completeness in Section \ref{sec:soundnesscompleteness}. 

  We are now ready to define a notion of infinitary proofs in $\mathsf{Focus^2}$. 
	\begin{definition}{}
	\label{defn:focus2infty}
	A \emph{$\mathsf{Focus_\infty^2}$-proof} is a (possibly infinite) derivation in $\mathsf{Focus^2}$ with:
	\begin{enumerate}
		\item All leaves are axioms.
		\item On every infinite branch cofinitely many sequents have a formula in focus.
		\item Every infinite branch has infinitely many applications of $\mathsf{R}_{[a]}$.
	\end{enumerate}
	\end{definition}
	As mentioned above, conditions 2 and 3 are meant to ensure that every infinite trace contains infinitely many $\nu$-formulas. We will use this in Section \ref{sec:soundnesscompleteness} to show that infinitary proofs are sound. The key idea is to relate the traces in a proof to matches in a purported countermodel of its conclusion. 
	
	We leave it to the reader to verify that each rule, apart from the modal rule, is truth-preserving with respect to a given model $\mathbb{S}$, state $s$ of $\mathbb{S}$, and ops $f$ for Refuter in $\mathcal{E}(\bigwedge \Sigma, \mathbb{S})$. Since Lemma \ref{lem:soundnessmodalrule} already showed the soundness of the modal rule, we obtain: 
	\begin{proposition}
	Well-founded $\mathsf{Focus_\infty^2}$-proofs are sound.
	\end{proposition}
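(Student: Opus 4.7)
The plan is to proceed by induction on the well-founded tree structure of the proof, reducing the global soundness statement to a local soundness claim for each rule: whenever the conclusion of a rule is refuted by some triple $(\mathbb{S}, s, f)$, then some premise is refuted by a triple $(\mathbb{S}, t, f)$ with the same $f$ (and with $t = s$ except in the modal case). Combined with the observation that the axioms cannot be refuted, this yields validity of every sequent at the root of a well-founded derivation, which is exactly soundness.

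For the base cases, $\mathsf{Ax1}$ uses that no ops $f$ can be winning at both $(\varphi, s)$ and $(\overline \varphi, s)$; $\mathsf{Ax2}$ is immediate because, by definition, $\mathbb{S}, s \Vdash_f \varphi \not\leadsto \psi$ is the negation of $\mathbb{S}, s \Vdash_f \varphi \leadsto \psi$; and $\mathsf{Ax3}$ is witnessed by the trivial zero-length match at $(\varphi, s)$. For the inductive rules other than the modal one, I verify local soundness with $t = s$. The Boolean rules $\mathsf{R}_\lor$ and $\mathsf{R}_\land$ fall out of the game semantics, noting that the newly introduced atoms $(\varphi \lor \psi) \leadsto \varphi$ and $(\varphi \land \psi) \leadsto \varphi$ (and their mates) are each witnessed by a single game step from the compound formula to one of its components, which is not a $\mu$-formula move. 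The rule $\mathsf{R}_\mu$ uses that $(\mu x \varphi, s)$ has $(\varphi[\mu x \varphi/x], s)$ as its unique successor. The rules $\mathsf{F}$, $\mathsf{cut}$, $\mathsf{tc}$ and $\mathsf{trans}$ are immediate from, respectively, the annotation-insensitivity of $\Vdash_f$, the dichotomies for annotated formulas and trace atoms, and the concatenability of two $f$-guided $\mu$-free matches into a single such match. The modal rule $\mathsf{R}_{[a]}$ is exactly Lemma \ref{lem:soundnessmodalrule}.

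The most delicate rule is $\mathsf{R}_\nu$, whose premise introduces the pair $\nu x \varphi \not\leadsto \varphi[\nu x \varphi/x]$ and $\varphi[\nu x \varphi/x] \leadsto \nu x \varphi$. Assuming the conclusion is refuted at $(\mathbb{S}, s, f)$, the first atom is refuted at $(\mathbb{S}, s, f)$ by the one-step $f$-guided match $(\nu x \varphi, s) \cdot (\varphi[\nu x \varphi/x], s)$ (no $\mu$-formula appears as a non-final entry). For the second atom I argue by contradiction: if it were satisfied, concatenation with the previous match would witness both $\nu x \varphi \leadsto \varphi[\nu x \varphi/x]$ and $\varphi[\nu x \varphi/x] \leadsto \nu x \varphi$ at $s$, forcing $f$ to be losing at $(\nu x \varphi, s)$ by the observation recorded right after the definition of $\Vdash_f$ on trace atoms; this contradicts the refutation of $\nu x \varphi$ in the conclusion. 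Finally, $\varphi[\nu x \varphi/x]$ is refuted because $(\nu x \varphi, s)$ has $(\varphi[\nu x \varphi/x], s)$ as its unique admissible successor.

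I expect this $\mathsf{R}_\nu$ case to be the main obstacle: while each step is short, it is the only rule that requires combining the game-theoretic reading of $\Vdash_f$ with the loop-based semantics of trace atoms, via the ``looping forces loss'' observation. All other rules reduce to essentially calculational verifications. Once local soundness is established for every rule, soundness of the entire well-founded derivation follows by an immediate induction on proof height.
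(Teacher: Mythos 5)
Your proposal is correct and takes essentially the same approach as the paper: the paper likewise reduces soundness of well-founded proofs to local, rule-by-rule truth preservation with respect to a fixed ops $f$ (stated in the same contrapositive, refutation-propagating form), leaves the non-modal rules to the reader, and invokes Lemma~\ref{lem:soundnessmodalrule} for $\mathsf{R}_{[a]}$, so your write-up simply supplies the verifications the paper omits, and they check out. The only point to polish is the degenerate case $\varphi = x$ in $\mathsf{R}_\nu$, where $\nu x \varphi$ coincides with its own unfolding $\varphi[\nu x \varphi/x]$ and the cited observation's side condition $\varphi \neq \psi$ fails; there the conclusion contains $\nu x . x$, which no ops refutes, so local soundness holds vacuously (alternatively, note that your one-step unfolding match is still non-trivial and can be iterated directly to produce an infinite $\mu$-free $f$-guided match).
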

	We close this section with two examples of $\mathsf{Focus^2_\infty}$-proofs. The first example demonstrates $\mathsf{cut}$ and item 1(c) of Definition \ref{defn:jump}. The second example demonstrates trace atoms. {}
	\begin{example}
		Define the following two formulas:
		\begin{align*}
			\varphi := \mu x (\langle \breve{a} \rangle x \lor p), &&
			\psi := \nu y ([a] x \land \varphi).
		\end{align*}
		The formula $\varphi$ expresses `there is a backwards $a$-path to some state where $p$ holds'. The formula $\psi$ expresses `$\varphi$ holds at every state reachable by a forwards $a$-path'. As our context $\Sigma$ we take least negation-closed set containing $\varphi$ and $\psi$:
		\[
		\{\varphi, \ld{\breve{a}} \varphi \lor p, \ld{\breve{a}} \varphi, p, \psi, \lb{a} \psi \land \varphi, \lb{a} \psi, \overline \varphi, \lb{\breve{a}} \overline \varphi \land \overline p, p, \lb{\breve{a}} \overline \varphi, \overline \psi, \ld{a} \overline \psi \lor \overline \varphi, \ld{a} \overline \psi\}.
		\]
		The implication $p \rightarrow \psi$ is valid, and below we give a $\mathsf{Focus^2_\infty}$-proof. As this particular proof does not rely on trace atoms, we omit them for readability. 
		\begin{align*}
			\AxiomC{}
			\RightLabel{$\mathsf{Ax1}$}
			\UnaryInfC{$\overline p^\f, \psi^\f, \ld{\breve{a}} \varphi^\u, p^\u$}
			\RightLabel{$\mathsf{R}_\lor$}
			\UnaryInfC{$\overline p^\f, \psi^\f, \ld{\breve{a}} \varphi \lor p^\u$}
			\RightLabel{$\mathsf{R}_\mu$}
			\UnaryInfC{$\overline p^\f, \psi^\f, \varphi^\u$}
			\AxiomC{$\pi$}
			\noLine
			\UnaryInfC{$\psi^\f, \lb{\breve{a}} \overline{\varphi}^\u$}
			\RightLabel{$\mathsf{R}_{[a]}$}
			\UnaryInfC{$\overline{p}^\f, \lb{a} \psi^\f, \overline{\varphi}^\u$}
			\AxiomC{}
			\RightLabel{$\mathsf{Ax1}$}
			\UnaryInfC{$\overline{p}^\f, \varphi^\f, \overline{\varphi}^\u$}
			\RightLabel{$\mathsf{R}_\land$}
			\BinaryInfC{$\overline{p}^\f, \lb{a} \psi \land \varphi^\f, \overline{\varphi}^\u$}
			\RightLabel{$\mathsf{R}_\nu$}
			\UnaryInfC{$\overline{p}^\f, \psi^\f, \overline{\varphi}^\u$}
			\RightLabel{$\mathsf{cut}$}
			\BinaryInfC{$\overline p^\f, \psi^\f$}
			\DisplayProof
		\end{align*}
		In the above proof, the proof $\pi$ is given by
		\begin{align*}
			\AxiomC{}
			\RightLabel{$\mathsf{Ax1}$}
			\UnaryInfC{$\overline{\varphi}^\u, \varphi^\u$}
			\RightLabel{$\mathsf{R}_{\lb{\breve{a}}}$}
			\UnaryInfC{$[a]\psi^\f, \lb{\breve{a}} \overline{\varphi}^\u, \ld{\breve{a}} \varphi^\u, p^\u$}
			\RightLabel{$\mathsf{R}_\lor$}
			\UnaryInfC{$[a]\psi^\f, \lb{\breve{a}} \overline{\varphi}^\u, \ld{\breve{a}} \varphi \lor p^\u$}
			\RightLabel{$\mathsf{R}_{\mu}$}
			\UnaryInfC{$[a]\psi^\f, \lb{\breve{a}} \overline{\varphi}^\u, \varphi^\u$}
			\AxiomC{$\vdots$}
			\noLine
			\UnaryInfC{$\psi^\f, \lb{\breve{a}} \overline{\varphi}^\u$}
			\RightLabel{$\mathsf{R}_{\lb{a}}$}
			\UnaryInfC{$[a]\psi^\f, \lb{\breve{a}} \overline{\varphi}^\u, \overline{\varphi}^\u$}
			\RightLabel{$\mathsf{cut}$}
			\BinaryInfC{$[a]\psi^\f, \lb{\breve{a}} \overline{\varphi}^\u$}
			\AxiomC{}
			\RightLabel{$\mathsf{Ax1}$}
			\UnaryInfC{$\varphi^\u, \overline{\varphi}^\u$}
			\RightLabel{$\mathsf{R}_{\lb{\breve{a}}}$}
			\UnaryInfC{$\ld{\breve{a}} \varphi^\u, p^\u, \lb{\breve{a}} \overline{\varphi}^\u$}
			\RightLabel{$\mathsf{R}_{\lor}$}
			\UnaryInfC{$\ld{\breve{a}} \varphi \lor p^\u, \lb{\breve{a}} \overline{\varphi}^\u$}
			\RightLabel{$\mathsf{R}_{\mu}$}
			\UnaryInfC{$\varphi^\f, \lb{\breve{a}} \overline{\varphi}^\u$}
			\RightLabel{$\mathsf{R}_{\land}$}
			\BinaryInfC{$[a]\psi \land \varphi^\f, \lb{\breve{a}} \overline{\varphi}^\u$}
			\RightLabel{$\mathsf{R}_\nu$}
			\UnaryInfC{$\psi^\f, \lb{\breve{a}} \overline{\varphi}^\u$}
			\DisplayProof
		\end{align*}
		where the vertical dots indicate that the proof continues by repeating what happens at the root of $\pi$. The resulting proof of $\overline p^\f, \psi^\f$ has a single infinite branch, which can easily be seen to satisfy the conditions of Definition \ref{defn:focus2infty}.
	\end{example}
	\begin{example}
	Define $\varphi := \nu x \ld{a} \ld{\breve{a}} x$, \emph{i.e.} $\varphi$ expresses that there is an infinite path of alternating $a$ and $\breve{a}$ transitions. Clearly this holds at every state with an $a$-successor. Hence the implication $\ld{a} p \rightarrow \varphi$ is valid. As context $\Sigma$ we consider the least negation-closed set containing both $\ld{a} p$ and $\varphi$, \emph{i.e.},
	\[
		\{\ld{a} p, p, \varphi, \ld{a} \ld{\breve{a}} \varphi, \ld{\breve{a}} \varphi, [a] \overline p, \overline p, \overline \varphi, \lb{a} \lb{\breve{a}} \overline{\varphi}, \lb{\breve{a}} \overline \varphi \}.
	\]
	The following is a $\mathsf{Focus^2_\infty}$-proof of $\ld{a} p \rightarrow \varphi$.
	\[
	\AxiomC{}
	\RightLabel{$\mathsf{Ax2}$}
	\UnaryInfC{$\overline{p}^\f, \ld{\breve{a}} \varphi^\f, \ld{\breve{a}} \varphi \not \leadsto \ld{\breve{a}} \varphi, \ld{\breve{a}} \varphi \leadsto \ld{\breve{a}} \varphi$}
	\RightLabel{$\mathsf{R}_{\lb{a}}$}
	\UnaryInfC{$\lb{a} \overline{p}^\f, \ld{a} \ld{\breve{a}} \varphi^\f, \varphi \not \leadsto \ld{a} \ld{\breve{a}} \varphi, \ld{a} \ld{\breve{a}} \varphi \leadsto \varphi$}
	\RightLabel{$\mathsf{R}_\nu$}
	\UnaryInfC{$\lb{a} \overline{p}^\f, \varphi^\f$}
	\DisplayProof
	\]
	Note that it is also possible to use $\mathsf{Ax3}$ instead of $\mathsf{Ax2}$ in the above proof.
	\end{example}
	\section{The proof search game}
  \label{sec:proofsearchgame}
	We will define a proof search game $\mathcal{G}(\Sigma)$ for the proof system $\mathsf{Focus_\infty^2}$ in the standard way. First, we require a slightly more formal definition of the notion of a rule instance.
    \begin{definition}
    A \emph{rule instance} is a triple $(\Gamma, \mathsf{r}, \langle \Delta_1, \ldots, \Delta_n \rangle)$ such that 
    \[
    \AxiomC{$\Delta_1 \cdots \Delta_n$}
    \RightLabel{$\mathsf{r}$}
    \UnaryInfC{$\Gamma$}
    \DisplayProof
    \]
    is a valid rule application in $\mathsf{Focus^2}$.
    \end{definition}
    The set of positions of $\mathcal{G}(\Sigma)$ is $\mathsf{Seq}_\Sigma \cup \mathsf{Inst}_\Sigma$, where $\mathsf{Seq}_\Sigma$ is the set of sequents and $\mathsf{Inst}_\Sigma$ is the set of valid rule instances (containing only formulas in $\Sigma$). Since $\Sigma$ is finite, the game $\mathcal{G}(\Sigma)$ has only finitely many positions. The ownership function and admissible moves of $\mathcal{G}(\Sigma)$ are as in the following table:
	\begin{center}
		\begin{tabular}{|c|c|c|}
			\hline
			Position & Owner & Admissible moves \\
			\hline
			$\Gamma \in \mathsf{Seq}_\Sigma$ & \  Prover \  & $\{i \in \mathsf{Inst}_\Sigma \mid \mathsf{conc}(i) = \Gamma\}$ \\
			$(\Gamma, \mathsf{r}, \langle \Delta_1, \ldots, \Delta_n \rangle) \in \mathsf{Inst}_\Sigma$ \  & \ Refuter \ & $\{\Delta_i \mid 1 \leq i \leq n\}$\\
			\hline
		\end{tabular}
	\end{center}
	In the above table, the expression $\mathsf{conc}(i)$ stands for the conclusion (\emph{i.e.} the first element of the triple) of the rule instance $i$. As usual, a finite match is lost by the player who got stuck. An infinite $\mathcal{G}(\Sigma)$-match is won by Prover if and only it has a final segment
	\[
	\Gamma_0 \cdot i_0 \cdot \Gamma_1 \cdot i_1 \cdots
	\]
	on which each $\Gamma_k$ has at least one formula in focus and the instance $i_k$ is an application of $\mathsf{R}_{[a]}$ for infinitely many $k$. The two main observations about $\mathcal{G}(\Sigma)$ that we will use are the following:
	\begin{enumerate}
		\item A $\mathsf{Focus_\infty^2}$-proof of $\Gamma$ is the same as a winning strategy for Prover in $\mathcal{G}(\Sigma)@\Gamma$.
		\item $\mathcal{G}(\Sigma)$ is a parity game, whence positionally determined.
	\end{enumerate}
	The first observation is immediate when viewing a winning strategy as a subtree of the full game tree. To make the second observation more explicit, we give the parity function $\Omega$ for $\mathcal{G}(\Sigma)$. On $\mathsf{Seq}_\Sigma$, we simply set $\Omega(\Gamma) := 0$ for every $\Gamma \in \mathsf{Seq}_\Sigma$. On $\mathsf{Inst}_\Sigma$, we define:
	\[
	\Omega(\Gamma, \mathsf{r}, \langle \Delta_1, \ldots, \Delta_n \rangle) := \begin{cases} 3 & \text{ if $\Gamma$ has no formula in focus,} \\
		2 &\text{ if $\Gamma$ has a formula in focus and $\mathsf{r} = \mathsf{R}_{[a]}$,} \\
		1 &\text{ if $\Gamma$ has a formula in focus and $\mathsf{r} \not= \mathsf{R}_{[a]}$.}
	\end{cases}
	\]
	As a result we immediately obtain a method to reduce general non-well-founded proofs to cyclic proofs. Indeed, if Prover has a winning strategy, she also has \emph{positional} winning strategy, which clearly corresponds to a \emph{regular} $\mathsf{Focus}_{\infty}^\mathsf{2}$-proof (that is, a proof containing only finitely many non-isomorphic subtrees.) 
	\section{Soundness and completeness}
	\label{sec:soundnesscompleteness}
	In this section we will prove the soundness and completeness of the system $\mathsf{Focus_\infty^2}$. More specifically, for soundness we will show that if $\Gamma$ is invalid, then Refuter has a winning strategy in $\mathcal{G}(\Sigma)@\Gamma$. Our completeness result is slightly less wide in scope, showing only that if Refuter has a winning strategy in $\mathcal{G}(\Sigma)@\Gamma$, then $\Gamma^-$ is invalid. 
  \subsection{Soundness}
	For soundness, we assume an ops $f$ for $\forall$ in $\mathcal{E} := \mathcal{E}(\bigwedge \Sigma, \mathbb{S})$ for some $\mathbb{S}$ and $s$ such that $\mathbb{S}, s \not \Vdash_f \Gamma$. The goal is to construct from $f$ a strategy $T_f$ for Refuter in $\mathcal{G} := \mathcal{G}(\Sigma)$. The key idea is to assign to each position $p$ reached in $\mathcal{G}$ a state $s$ such that whenever $p = \Delta \in \mathsf{Seq}_\Sigma$ it holds that $\mathbb{S}, s \not \Vdash_f \Delta$. For $p \in \mathsf{Inst}_\Sigma$, the choice of $T_f$ is then based on $f(\varphi, s)$ where $\varphi$ is a formula determined by the rule instance $p$. The existence of such an $s$ implies that $p$ cannot be an axiom and thus that Refuter never gets stuck. For infinite matches, the proof works by showing that a $T_f$-guided $\mathcal{G}@\Gamma$-match lost by Refuter induces an $f$-guided $\mathcal{E}@\varphi$-match lost by $\forall$. As mentioned above, the key idea here is to relate an $f$-guided $\mathcal{E}@\varphi$-match to a trace through the $T_f$-guided $\mathcal{G}@\Gamma$-match. If the $\mathcal{G}@\Gamma$-match is losing for Refuter, it must contain a trace with infinitely many $\nu$-formulas, which gives us an $\mathcal{E}@\varphi$-match lost by $\forall$. A novel challenge here is that not all steps in a trace necessarily go from a formula to one of its immediate ancestors, but may instead transfer along a negated trace atom. When this happens, say from $\varphi_n$ to $\varphi_{n+1}$, it holds for $\Delta$ as above that both $\varphi_n^\f$ and $\varphi_n \not \leadsto \varphi_{n+1}$ belong to $\Delta$. Since, by the above, it holds that $\mathbb{S}, s \not \Vdash_f \Delta$, we use the fact that $\mathbb{S}, s \Vdash_f \varphi_n \leadsto \varphi_{n+1}$ to take the $\mathcal{E}@\varphi$-match from $(\varphi_n, s)$ to $(\varphi_{n+1}, s)$. In the end, we obtain:
	\begin{proposition} 
	\label{prop:soundness}
	If $\Gamma$ is the conclusion of a $\mathsf{Focus_\infty^2}$-proof, then $\Gamma$ is valid. 
	\end{proposition}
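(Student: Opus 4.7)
The plan is to prove the contrapositive: assume that $\Gamma$ is invalid, so there exist a model $\mathbb{S}$, a state $s_0$, and an ops $f$ for $\forall$ in $\mathcal{E} := \mathcal{E}(\bigwedge \Sigma, \mathbb{S})$ with $\mathbb{S}, s_0 \not\Vdash_f \Gamma$. From these data I will build a winning strategy $T_f$ for Refuter in $\mathcal{G}(\Sigma)@\Gamma$, which by the correspondence established in Section \ref{sec:proofsearchgame} precludes the existence of a $\mathsf{Focus_\infty^2}$-proof of $\Gamma$.

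The strategy $T_f$ is defined by maintaining the invariant that each sequent position $\Delta$ reached in the $T_f$-guided match comes equipped with a state $s_\Delta$ of $\mathbb{S}$ such that $\mathbb{S}, s_\Delta \not\Vdash_f \Delta$. Since no axiom is falsifiable, this invariant forces Refuter never to get stuck. At each rule-instance position Refuter picks a premise that preserves the invariant: for $\mathsf{R}_\land$, $\mathsf{cut}$ and $\mathsf{tc}$ this uses, respectively, optimality of $f$ on conjunctions and the observation that at most one of $\varphi$/$\overline\varphi$ and of $\varphi \leadsto \psi$/$\varphi \not\leadsto \psi$ can be satisfied; for $\mathsf{R}_{[a]}$, Lemma \ref{lem:soundnessmodalrule} furnishes an $a$-successor $t$ of $s_\Delta$ with $\mathbb{S}, t \not\Vdash_f \Gamma^{[a]\varphi^b}$, and we set $s_{\Gamma^{[a]\varphi^b}} := t$. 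The remaining rules are unary and leave the state unchanged.

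To show that $T_f$ is winning, consider an arbitrary infinite $T_f$-guided match $\mathcal{M} = \Delta_0 \cdot i_0 \cdot \Delta_1 \cdot i_1 \cdots$ with associated states $s_n := s_{\Delta_n}$. Assume for contradiction that Refuter loses $\mathcal{M}$, so on a tail every $\Delta_n$ has a formula in focus and infinitely many $i_n$ are applications of $\mathsf{R}_{[a]}$. By a König-style argument on the graph whose edges are (i) ordinary immediate-ancestry links between focussed formulas in adjacent sequents and (ii) focus transfers through the function $s$ of Definition \ref{defn:jump} along negated trace atoms $\varphi_n \not\leadsto \varphi_{n+1} \in \Delta_n$, one extracts an infinite trace $(\varphi_n)_{n \in \omega}$. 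This trace is translated into a single $f$-guided $\mathcal{E}$-match: immediate-ancestry steps correspond directly to the rule-induced moves played at $(\varphi_n, s_n)$, while a detour via $\varphi_n \not\leadsto \varphi_{n+1}$ is filled in using the invariant, which gives $\mathbb{S}, s_n \Vdash_f \varphi_n \leadsto \varphi_{n+1}$ and hence an $f$-guided $\mathcal{E}$-segment from $(\varphi_n, s_n)$ to $(\varphi_{n+1}, s_n)$ that passes through no $\mu$-formula. The focus annotations, combined with the fact that $\mathsf{R}_{[a]}$ applies infinitely often, then guarantee (as in \cite{MartiVenema21}) that the trace contains infinitely many $\nu$-formulas, so by Proposition \ref{prop:propertiesaf} the resulting $\mathcal{E}$-match is won by $\exists$, contradicting the fact that $f$ is a winning ops for $\forall$ all along it.

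The main obstacle is the trace-construction and translation step. In the purely forward setting of \cite{MartiVenema21}, an infinite trace is obtained by a direct application of König's Lemma to immediate-ancestry, and the corresponding $\mathcal{E}$-match is read off move by move. Here a trace may instead jump across a negated trace atom, and the focus bit is transported along such a jump by the function $s$ inside Definition \ref{defn:jump}. The careful bookkeeping required to check that these detours splice coherently into a single $f$-guided $\mathcal{E}$-match — so that the argument relating the parity condition of $\mathcal{G}(\Sigma)$ to the parity condition of $\mathcal{E}$ still goes through — is the interesting content of the proof, and it is precisely what motivates the semantics of trace atoms and clause 2 of the jump rule.
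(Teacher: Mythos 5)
Your proposal is correct and takes essentially the same route as the paper's own proof: contraposition, a Refuter strategy $T_f$ maintaining the invariant that each reached sequent $\Delta$ carries a state $s_\Delta$ with $\mathbb{S}, s_\Delta \not\Vdash_f \Delta$ (so axioms are unreachable), and, for a lost infinite match, a K\H{o}nig-style extraction of a focussed trace whose immediate-ancestry steps and negated-trace-atom detours are spliced into a single $f$-guided $\mathcal{E}$-match without $\mu$-formulas, hence won by $\exists$, contradicting that $f$ is winning for $\forall$. The only cosmetic difference is that you invoke Lemma \ref{lem:soundnessmodalrule} for the modal case, where the paper inlines the same argument via $f([a]\varphi, s_f(\mathcal{M}_{\leq n}))$.
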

  \subsection{Completeness}
	For completeness we conversely show that from a winning strategy $T$ for Refuter in $\mathcal{G}@\Gamma$, we can construct a model $\mathbb{S}^T$ and a positional strategy $f_T$ for $\forall$ in $\mathcal{E}(\bigwedge \Sigma, \mathbb{S}^T)$ such that $\mathbb{S}^T$ falsifies $\Gamma^-$ with respect to $f_T$. The strategy $f_T$ we construct will not necessarily be optimal, but by Theorem \ref{thm:posdet} of Appendix \ref{sec:apppar} it follows that there must also be an ops $f$ such that $\mathbb{S}^T \not \Vdash_f \Gamma^-$. We will view $T$ as a tree, and restrict attention a certain subtree. We first need to define two relevant properties of rule applications. 
	\begin{definition}
	A rule application is \emph{cumulative} if all of the premisses are supersets of the conclusion. A rule application is \emph{productive} if all of the premisses are distinct from the conclusion. 
	\end{definition}
	Without renaming $T$, we restrict $T$ to its subtree where Prover adheres to the following (non-deterministic) strategy:
	\begin{enumerate}
		\item Exhaustively apply productive instances of $\mathsf{cut}$ and $\mathsf{tc}$.
		\item If applicable, apply the focus rule.
		\item Exhaustively take applications of $\mathsf{R}_\lor$, $\mathsf{R}_\land$, $\mathsf{R}_{\mu}$, $\mathsf{R}_\nu$, $\mathsf{trans}$ that are both cumulative and productive. 
		\item If applicable, apply an axiom.
		\item If applicable, apply a modal rule and loop back to stage (1).
	\end{enumerate}
	It is not hard to see that each of the above phases terminates. More precisely, phases (2), (4) and (5) either terminate immediately or after applying a single rule. By the productivity requirement and the finiteness of $\Sigma$, phases (1) and (3) must terminate after a finite number of rule applications as well. Note also that non-cumulative rule applications can only happen in phases (2) or (5). 

	We will now define the model $\mathbb{S}^T$. The set $S^T$ of states consists of maximal paths in $T$ not containing a modal rule. We write $\Gamma(\rho)$ for $\bigcup\{\Gamma : \Gamma \text{ occurs in } \rho\}$. Note that, since the only possibly non-cumulative rule application in $\rho$ is the focus rule, $\Gamma(\rho)^\f = \mathsf{last}(\rho)^\f$ for every state $\rho$ of $\mathbb{S}^T$. Moreover, we write $\rho_1 \xrightarrow{a} \rho_2$ if $\rho_2$ is directly above $\rho_1$ in $T$, separated only by an application of $\mathsf{R}_{[a]}$ (we assume that trees grow upwards). We write $\rightarrow$ for the union $\bigcup \{\xrightarrow{a} : a \in \mathsf{D}\}$. Clearly, under the relation $\rightarrow$ the states of $\mathbb{S}^T$ form a forest (not necessarily a tree!). We write $\rho \leq \tau$ if $\tau$ is a descendant of $\rho$ in this forest, \emph{i.e.} $\leq$ is the reflexive-transitive closure of $\rightarrow$. The relations $R_a^T$ of $\mathbb{S}^T$ are defined as follows:
	\[
	\rho_1 R_a^T \rho_2 \text{ if and only if } \rho_1 \xrightarrow{a} \rho_2 \text{ or } \rho_2 \xrightarrow{\breve{a}} \rho_1.
	\]
	Note that $\mathbb{S}^T$ is clearly regular. We define the valuation $V^T :S^T \rightarrow \mathcal{P}(\mathsf{P})$ by
	\[
	V^T(\rho) := \{p : \overline p \in \Gamma(\rho)^-\}.
	\]
	The restriction on $T$, together with the fact that it is winning for Refuter, guarantees that each $\Gamma(\rho)$ satisfies certain saturation properties, which are spelled out in the following lemma. We will later use these saturation conditions to construct our positional strategy $f_T$ for $\forall$ in $\mathcal{E}(\bigwedge \Sigma, \mathbb{S}^T)$ and to show that $\mathbb{S}^T$ falsifies $\Gamma$ with respect to $f_T$.
	\begin{lemma}
		\label{lem:saturation}
		For every state $\rho$ of $\mathbb{S}^T$, the set $\Gamma(\rho)$ is \emph{saturated}. That is, it satisfies all of the following conditions:
		\begin{itemize}
			\item For no $\varphi$ it holds that $\varphi, \overline{\varphi} \in \Gamma(\rho)^-$. 
			\item For all $\varphi$ it holds that $\varphi^\u \in \Gamma(\rho)$ if and only if $\overline{\varphi}^\u \notin \Gamma(\rho)$
			\item For all $\varphi$ it holds that $\varphi \leadsto \psi \in \Gamma(\rho)$ if and only if $\varphi \not \leadsto \psi \notin \Gamma(\rho)$. 
			\item For no $\varphi$ it holds that $\varphi \leadsto \varphi \in \Gamma(\rho)$. 
			\item If $\psi_1 \lor \psi_2 \in \Gamma(\rho)^-$, then for both $i$: $\psi_1 \lor \psi_2 \not \leadsto \psi_i \in \Gamma(\rho)$ and $\psi_i \in \Gamma(\rho)^-$. 
			\item If $\psi_1 \land \psi_2 \in \Gamma(\rho)^-$, then for some $i$: $\psi_1 \land \psi_2 \not \leadsto \psi_i \in \Gamma(\rho)$ and $\psi_i \in \Gamma(\rho)^-$. 
			\item If $\mu x \varphi \in \Gamma(\rho)^-$, then $\varphi[\mu x\varphi/x] \in \Gamma(\rho)^-$.
			\item If $\nu x \varphi \in \Gamma(\rho)^-$, then $\nu x \varphi \not \leadsto \varphi[\nu x \varphi/x] \in \Gamma(\rho)$ and $\varphi[\nu x \varphi/x] \in \Gamma(\rho)^-$. 
			\item If $\nu x \varphi \in \Gamma(\rho)^-$, then $\varphi[\nu x \varphi/ x] \leadsto \nu x \varphi \in \Gamma(\rho)$. 
			\item If $\varphi \not \leadsto \psi, \psi \not \leadsto \chi \in \Gamma(\rho)$, then $\varphi \not \leadsto \chi \in \Gamma(\rho)$. 
		\end{itemize}
	\end{lemma}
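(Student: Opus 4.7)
The plan is to treat each of the ten saturation conditions by exploiting two key features of the restricted $T$: first, since $T$ is winning for Refuter, no axiom of $\mathsf{Focus^2}$ can be applicable to any sequent occurring in $T$, for otherwise Prover would apply it in phase (4) and win; second, since Prover exhaustively applies productive cumulative instances in phases (1) and (3), every such instance whose conclusion occurs in $\rho$ must either be non-productive or have already been applied. As a preliminary observation, all rule applications along $\rho$ are cumulative except for the focus rule, which merely converts all $\u$-annotations to $\f$-annotations while preserving $\Gamma^{-}$ and the (negated) trace atoms contained in $\Gamma$. Consequently, $\Gamma(\rho)^{-}$ and the (negated) trace atoms contained in $\Gamma(\rho)$ agree with those of $\mathsf{last}(\rho) = \Delta$. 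Moreover, since $\Sigma$ is finite and no modal rule appears in $\rho$, each phase of the restricted strategy terminates after finitely many steps, so $\rho$ is itself finite and ends at a sequent $\Delta$ on which every productive cumulative application from phases (1) and (3) has already been made.

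The four \emph{negative} conditions---forbidding $\varphi, \overline\varphi \in \Gamma(\rho)^{-}$; forbidding $\varphi \leadsto \varphi \in \Gamma(\rho)$; and the forward directions of the two biconditionals---each follow from the first observation. Specifically, $\varphi, \overline\varphi \in \Gamma(\rho)^{-}$ would mean both occur annotated in $\Delta$, triggering $\mathsf{Ax1}$; $\varphi \leadsto \varphi \in \Gamma(\rho)$ would mean $\varphi \leadsto \varphi \in \Delta$, triggering $\mathsf{Ax3}$; simultaneous membership of $\varphi \leadsto \psi$ and $\varphi \not\leadsto \psi$ in $\Gamma(\rho)$ triggers $\mathsf{Ax2}$ at $\Delta$; and simultaneous membership of $\varphi^{\u}$ and $\overline\varphi^{\u}$ reduces immediately to the first case. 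The remaining \emph{positive} conditions each correspond to exhaustivity of a particular rule. For instance, if $\psi_{1} \lor \psi_{2} \in \Gamma(\rho)^{-}$ then $(\psi_{1} \lor \psi_{2})^{\a} \in \Delta$ for some $\a$, and a further application of $\mathsf{R}_{\lor}$ would be productive unless each of $\psi_{1}^{\a}, \psi_{2}^{\a}, (\psi_{1} \lor \psi_{2}) \not\leadsto \psi_{1}, (\psi_{1} \lor \psi_{2}) \not\leadsto \psi_{2}$ is already in $\Delta$. The conditions for $\mathsf{R}_{\land}$ (where Refuter's choice of premise yields the existential $i$), $\mathsf{R}_{\mu}$, $\mathsf{R}_{\nu}$, and $\mathsf{trans}$ are treated analogously. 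The backward directions of the two biconditionals use exhaustivity of $\mathsf{cut}$ and $\mathsf{tc}$ in phase (1): for each $\varphi$, cut on $\varphi$ is either non-productive---whence one of $\varphi^{\u}, \overline\varphi^{\u}$ is already in the current sequent---or applied along $\rho$, in which case Refuter's choice forces one of $\varphi^{\u}, \overline\varphi^{\u}$ into $\Gamma(\rho)$; the backward direction of the biconditional for trace atoms is handled analogously.

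The main obstacle is the preliminary bookkeeping: verifying cumulativity of every rule in phases (1) and (3), checking that the focus rule does not disturb the arguments because it preserves $\Gamma^{-}$ and the trace-atom content of $\Gamma$, and confirming that $\rho$ is finite so that the entire analysis can be carried out at the single sequent $\Delta$. Once this framework is set up, each of the ten saturation conditions reduces to a direct inspection of the corresponding rule of $\mathsf{Focus^2}$.
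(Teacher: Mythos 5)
Your proposal is correct and follows exactly the argument the paper intends (the paper leaves this lemma's proof implicit, justifying it in one sentence by ``the restriction on $T$, together with the fact that it is winning for Refuter''): the negative conditions follow because an applicable axiom would persist, by cumulativity and the annotation-independence of $\mathsf{Ax1}$--$\mathsf{Ax3}$, to the phase-(4) checkpoint where Prover would apply it and defeat Refuter, while the positive conditions follow from exhaustion of the productive (cumulative) instances of $\mathsf{cut}$, $\mathsf{tc}$, $\mathsf{R}_\lor$, $\mathsf{R}_\land$, $\mathsf{R}_\mu$, $\mathsf{R}_\nu$ and $\mathsf{trans}$, read off at the appropriate sequents of $\rho$ and collected in the union $\Gamma(\rho)$. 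Your care with the two genuine subtleties --- that the focus rule breaks cumulativity but preserves $\Gamma^-$ and the trace-atom content, and that the biconditional clauses must be witnessed at the phase-(1) exhaustion point inside $\rho$ rather than at $\mathsf{last}(\rho)$ (where the focus rule may have stripped the $\u$-annotations) --- is precisely what makes the argument go through.
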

Now let $\rho_0$ be a state of $\mathbb{S}^T$ containing the root $\Gamma$ and let $\varphi_0$ be some formula such that $\varphi_0 \in \Gamma^-$. We wish to show that $\varphi_0$ is \emph{not} satisfied at $\rho_0$ in $\mathbb{S}^T$. To this end, we will construct a winning strategy $f_T$ for $\forall$ in the game $\mathcal{E} := \mathcal{E}(\bigwedge \Sigma, \mathbb{S}^T)$ initialised at $(\varphi_0, \rho_0)$. The strategy $f_T$ is defined as follows:
	\begin{itemize}
		\item At $(\psi_1 \land \psi_2, \rho)$, pick a conjunct $\psi_i \in \Gamma(\rho)^-$ such that $\psi_1 \land \psi_2 \not \leadsto \psi_i \in \Gamma(\rho)$. 
		\item At $([a]\varphi, \rho)$, choose $(\varphi, \tau)$ for some $\tau$ such that $\rho \xrightarrow{a} \tau$ by virtue of some application of $\mathsf{R}_{[a]}$ with $[a] \varphi^{\a}$ principal for some $b \in \{\u, \f\}$.
	\end{itemize}
	Before we show that $f_T$ is winning for $\forall$, we must first argue that it is well defined. By saturation, for every formula $\psi_1 \land \psi_2$ contained in $\Gamma(\rho)^-$, there is a $\psi_i \in \Gamma(\rho)^-$ with $\psi_1 \land \psi_2 \not \leadsto \psi_i \in \Gamma(\rho)$. Likewise, for every formula $[a] \varphi^{\a} \in \Gamma(\rho)$, there is a $\tau$ directly above $\rho$ in $T$, separated only by an application of $\mathsf{R}_{[a]}$ with $[a] \varphi^{\a}$ principal. The following lemma therefore suffices. Its proof is by induction on the length of $\mathcal{M}$ and heavily relies on the saturation properties of Lemma \ref{lem:saturation}. 
	\begin{lemma}
		\label{lem:inevalimpliesinsequent}
		Let $\mathcal{M}$ be an $f_T$-guided $\mathcal{E}$-match initialised at $(\varphi_0, \rho_0)$. Then for any position $(\varphi, \rho)$ occurring in $\mathcal{M}$ it holds that $\varphi \in \Gamma(\rho)^-$. Moreover, if $(\varphi, \rho)$ comes directly after a modal step and the focus rule is applied in $\rho$, then $\varphi^\f \in \Gamma(\rho)$.
	\end{lemma}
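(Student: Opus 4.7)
The plan is to argue by induction on the length of the prefix of $\mathcal{M}$ ending at $(\varphi, \rho)$. The base case is immediate: $\rho_0$ is a path in $T$ containing the root sequent $\Gamma$, so $\Gamma \subseteq \Gamma(\rho_0)$ and hence $\varphi_0 \in \Gamma^- \subseteq \Gamma(\rho_0)^-$ by hypothesis.

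For the inductive step, assuming $\varphi \in \Gamma(\rho)^-$, I examine the successor position $(\psi, \tau)$ of $(\varphi, \rho)$ in $\mathcal{M}$, splitting into cases according to the principal connective of $\varphi$. The Boolean and fixpoint cases ($\lor, \land, \mu, \nu$) are routine: $\tau = \rho$, and the required $\psi \in \Gamma(\rho)^-$ follows directly from the matching saturation clause of Lemma \ref{lem:saturation}. In the conjunction case, this same clause simultaneously certifies that $f_T$ is well-defined. When $\varphi = [a]\varphi'$, the strategy $f_T$ picks $\tau$ with $\rho \xrightarrow{a} \tau$ via an application of $\mathsf{R}_{[a]}$ having $[a]\varphi'^b$ principal, and clause 1(a) of Definition \ref{defn:jump} places $\varphi'$ into the premise of this rule, whence $\varphi' \in \Gamma(\tau)^-$.

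The principal difficulty is the diamond case, $\varphi = \langle a \rangle \psi$, where $\exists$ picks an arbitrary $a$-successor $\tau$ of $\rho$. In the forward subcase $\rho \xrightarrow{a} \tau$, cumulativity of the non-modal rules inside the path $\rho$ ensures that $\langle a \rangle \psi$ persists as a side formula in the last sequent of $\rho$, whereupon clause 1(b) of Definition \ref{defn:jump} places $\psi$ in the premise and hence in $\Gamma(\tau)^-$. The delicate backward subcase is $\tau \xrightarrow{\breve a} \rho$ via an $\mathsf{R}_{[\breve a]}$-application with principal $[\breve a]\chi^b$, in which $\psi$ is not introduced by the jump at all. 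Here the plan is to combine the saturation conditions with the negation-closure of $\Sigma$: from $\langle a \rangle \psi \in \Gamma(\rho)^-$, the first saturation condition yields $\overline{\langle a \rangle \psi} = [a]\overline{\psi} \notin \Gamma(\rho)^-$; the contrapositive of clause 1(c), combined with the fact that $[a]\overline{\psi} \in \Sigma$ by negation-closure, then excludes $\overline{\psi}$ from the side formulas of $\mathsf{last}(\tau)$, so that $\overline{\psi} \notin \Gamma(\tau)^-$; and finally the cut saturation condition at $\tau$ forces $\psi^{\u} \in \Gamma(\tau)$, hence $\psi \in \Gamma(\tau)^-$.

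For the ``moreover'' clause, I observe that whenever $(\varphi, \rho)$ arises directly after a modal step, $\varphi$ is introduced into the first sequent of $\rho$ by the jump, so $\varphi \in \Gamma(\rho)^-$. If Prover's strategy applies the focus rule in $\rho$ (phase (2)), then the sequent produced by that rule has every formula carrying the $\f$ annotation; since $\varphi$ is among the underlying formulas at that point, $\varphi^{\f} \in \Gamma(\rho)$ as required.
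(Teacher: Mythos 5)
Your proof of the first claim is correct and follows essentially the same route as the paper's: induction along the match, a case split on the main connective of the current formula, clause 1(b) of Definition \ref{defn:jump} for the upward diamond step, and the combination of saturation (Lemma \ref{lem:saturation}) with clause 1(c) for the downward step. (You run that last chain contrapositively where the paper runs it as a reductio, but the content is identical.)

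The ``moreover'' clause, however, has a genuine gap. Your argument rests on the claim that whenever $(\varphi,\rho)$ comes directly after a modal step, $\varphi$ is introduced into $\mathsf{first}(\rho)$ by the jump. This holds when the match moves up the proof tree (the box case, where $f_T$ by definition only moves upward, and the forward diamond case), but it is false in precisely the backward subcase that you yourself isolated when proving the first claim: if the match moves from $(\ld{a}\psi,\rho)$ to $(\psi,\tau)$ with $\tau \xrightarrow{\breve a} \rho$, then the relevant application of $\mathsf{R}_{[\breve a]}$ has conclusion $\mathsf{last}(\tau)$ and premise $\mathsf{first}(\rho)$; its jump introduces formulas into the first sequent of the state you are \emph{leaving}, and nothing whatsoever into $\tau$. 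For this case your reasoning yields only what the first claim yields, namely $\psi^\u \in \Gamma(\tau)$, i.e.\ that $\psi^\u$ occurs in \emph{some} sequent of the path $\tau$. That is too weak for the focus statement: $\psi^\u$ could first appear only above the focus application --- for instance as the result of an $\mathsf{R}_\mu$ unfolding in phase (3), which strips focus --- in which case $\psi^\f$ need never occur in $\tau$ at all. The missing argument, which is the one the paper supplies, uses the phase structure of Prover's strategy: cut is applied exhaustively in phase (1), \emph{before} the focus rule, so the conclusion of the focus application must contain $\psi^\u$ or $\overline\psi^\u$; your own saturation-plus-1(c) chain already shows $\overline\psi \notin \Gamma(\tau)^-$, so that sequent contains $\psi^\u$, and the focus rule then puts $\psi^\f$ into its premise, giving $\psi^\f \in \Gamma(\tau)$ as required.
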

	The following lemma is key to the completeness proof. It shows that if an $f_T$-guided $\mathcal{E}@(\varphi_0, \rho_0)$-match loops from some state $\rho$ to itself, without passing through a $\mu$-formula, then this information is already contained in $\rho$ in the form of a negated trace atom. The proof goes by induction on the number of distinct states of $S^T$ occurring in $\mathcal{N}$. The base case, where only $\rho$ is visited, can be shown by applying several instances of Lemma \ref{lem:saturation}. For the inductive step, we crucially rely on the conditions 2(a) -- 2(d) of Definition \ref{defn:jump} to relate the trace atoms in two states $\tau$ and $\tau'$ such that $\tau R^T_a \tau'$.
	\begin{lemma}
	\label{lem:loops}
	Let $\rho \in S^T$. Suppose that an $f_T$-guided $\mathcal{E}@(\varphi_0, \rho_0)$-match $\mathcal{M}$ has a segment $\mathcal{N}$ of the form:
	\[
	(\varphi, \rho) = (\psi_0, s_0) \cdot (\psi_1, s_1) \cdots (\psi_n, s_n) = (\psi, \rho) \ \ \ (n \geq 0)
	\]
	such that for no $i < n$ the formula $\psi_i$ is a $\mu$-formula.  Then $\varphi \not \leadsto \psi \in \Gamma(\rho)$. 
	\end{lemma}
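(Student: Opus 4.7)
The plan is to proceed by induction on the number $N = |\{s_0, \ldots, s_n\}|$ of distinct states of $S^T$ occurring in $\mathcal{N}$, as indicated.

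In the base case $N = 1$, all states visited by $\mathcal{N}$ equal $\rho$, so $\mathcal{N}$ contains no modal step. For each $i < n$ the formula $\psi_i$ is then neither atomic (else the match would terminate), nor modal (else the state would change), nor a $\mu$-formula (by assumption), so $\psi_i \in \{\chi_1 \lor \chi_2,\ \chi_1 \land \chi_2,\ \nu x \chi\}$. In each case the appropriate clause of Lemma~\ref{lem:saturation} --- together with the definition of $f_T$ at conjunctions --- yields $\psi_i \not\leadsto \psi_{i+1} \in \Gamma(\rho)$; the transitivity clause of Lemma~\ref{lem:saturation} then delivers $\varphi \not\leadsto \psi \in \Gamma(\rho)$. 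The degenerate sub-case $n = 0$ (where $\varphi = \psi$) is handled separately: since $T$ never lets Prover apply $\mathsf{Ax3}$, we have $\varphi \leadsto \varphi \notin \Gamma(\rho)$, so by saturation $\varphi \not\leadsto \varphi \in \Gamma(\rho)$.

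For the inductive step $N \geq 2$, I decompose $\mathcal{N}$ at its visits to $\rho$ into sub-segments, each starting and ending at $\rho$. Sub-segments staying at $\rho$ throughout are dealt with by the base case. The remaining ones are \emph{excursions} $(\psi_j, \rho) \to (\psi_{j+1}, \tau) \to \cdots \to (\psi_{k-1}, \sigma) \to (\psi_k, \rho)$ whose interior never revisits $\rho$ and therefore visits at most $N - 1$ distinct states --- enough for the IH to apply to sub-loops at intermediate states inside it. To show $\psi_j \not\leadsto \psi_k \in \Gamma(\rho)$ I argue by contradiction: assuming $\psi_j \leadsto \psi_k \in \Gamma(\rho)$, the first modal step of the excursion --- whether $\rho \xrightarrow{a} \tau$ or $\tau \xrightarrow{\breve{a}} \rho$, and whether $\psi_j$ is $\lb{a}\psi_{j+1}$ or $\ld{a}\psi_{j+1}$ --- is mirrored by one of clauses 2(a), 2(c) of Definition~\ref{defn:jump}, transferring this atom to $\Gamma(\tau)$ in a reshaped form such as $\psi_{j+1} \leadsto \ld{\breve{a}}\psi_k$. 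Combining this with the negated trace atoms supplied by the IH along the interior, by transitivity, and by the dual clauses 2(b) or 2(d) applied at the closing modal step $\sigma \to \rho$, I derive a pair $\chi \leadsto \chi',\ \chi \not\leadsto \chi'$ in the same $\Gamma(\cdot)$, contradicting Refuter's avoidance of $\mathsf{Ax2}$.

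Gluing the resulting negated trace atoms for all sub-segments by the transitivity clause of Lemma~\ref{lem:saturation} finally delivers $\varphi \not\leadsto \psi \in \Gamma(\rho)$. The principal obstacle is the excursion case: the interior does not itself form a loop, so the IH is not applicable to it as a single block. The remedy is to further decompose the interior by peeling off maximal sub-loops at revisits of intermediate states, applying the IH to each, and chaining them via transitivity before shuttling the resulting atoms across the entering and exiting modal edges via clauses 2(a)--2(d). A secondary subtlety is the side-condition $\ld{\breve{a}}\xi \in \Sigma$ appearing in those clauses, which must be verified for each application; this is routine since $\Sigma$ is finite and negation-closed and all relevant formulas belong to the closure of the root sequent.
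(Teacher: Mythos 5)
Your overall scheme --- induction on the number $N$ of distinct states, decomposition of $\mathcal{N}$ at its visits to $\rho$, and gluing by the transitivity clause of Lemma~\ref{lem:saturation} --- is the same as the paper's, and your base case is essentially the paper's (modulo not citing Lemma~\ref{lem:inevalimpliesinsequent}, which you need in order to know that each $\psi_i$ lies in $\Gamma(\rho)^-$ before any saturation clause applies). But there is a genuine gap in your treatment of excursions, precisely at the point you yourself flag as ``the principal obstacle''. You assert that the interior of an excursion does not form a loop and propose to repair this by peeling off sub-loops at intermediate states. In fact the interior \emph{is} a loop: since $\mathbb{S}^T$ is a forest, two distinct neighbours of $\rho$ lie in different components of the forest with $\rho$ removed, so a path that leaves $\rho$ through $\tau$ and returns to $\rho$ without revisiting $\rho$ must re-enter through the same neighbour; hence $\sigma = \tau$, and the interior is a segment from $(\psi_{j+1}, \tau)$ to $(\psi_{k-1}, \tau)$ to which the induction hypothesis applies as a single block (it avoids $\rho$, so it visits at most $N-1$ distinct states). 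This is exactly how the paper handles its segments $\mathcal{B}_i$. Your substitute does not work as described: the negated trace atoms produced by sub-loops at \emph{different} intermediate states live in different sequents, and the transitivity clause of Lemma~\ref{lem:saturation} only chains atoms inside a single $\Gamma(\cdot)$; nor can the jump clauses merge them, since they transfer only atoms whose endpoints have the appropriate modal shape, and only from the conclusion of a modal rule application to its premiss.

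That last directionality point also breaks your contradiction argument for downward excursions. When $\tau \xrightarrow{\breve{a}} \rho$, the modal rule application connecting the two states has its conclusion in $\tau$ and its premiss in $\rho$, so clauses 2(a)/2(c) of Definition~\ref{defn:jump} transfer atoms from $\Gamma(\tau)$ to $\Gamma(\rho)$: you cannot use them to push the assumed atom $\psi_j \leadsto \psi_k \in \Gamma(\rho)$ ``down'' into $\Gamma(\tau)$. For this case one instead transfers the IH atom $\psi_{j+1} \not\leadsto \psi_{k-1} \in \Gamma(\tau)$ (whose right endpoint is $\lb{\breve{a}}\psi_k$ or $\ld{\breve{a}}\psi_k$) upward via 2(b)/2(d), obtaining $\psi_j \not\leadsto \psi_k \in \Gamma(\rho)$ directly, as the paper does; the contraposition-of-2(a)/2(c) argument you describe is correct only for upward excursions $\rho \xrightarrow{a} \tau$, where the jump indeed goes from $\rho$ to $\tau$. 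So the upward half of your excursion argument is sound, but the downward half needs this repair, and both halves need the forest observation first.
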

	With the above lemmata in place, we are ready to prove that $\forall$ wins every full $f_T$-guided $\mathcal{E}@(\varphi_0, \rho_0)$-match $\mathcal{M}$. If $\mathcal{M}$ is finite, it is not hard to show that it must be $\exists$ who got stuck. If $\mathcal{M}$ is infinite, the proof depends on whether $\mathcal{M}$ visits some single state infinitely often. If it does, one can show that if $\exists$ would win the match $\mathcal{M}$, then $\mathcal{M}$ would visit some state $\rho$ with $\nu x \varphi, \varphi[\nu x \varphi / x] \not \leadsto \varphi \in \Gamma(\rho)^-$, contradicting saturation. If, on the other hand, $\mathcal{M}$ visits each state at most finitely often, the proof works by showing that a win for $\exists$ in $\mathcal{M}$ would imply that $T$ contains an infinite branch won by Prover, which is also a contradiction. In the end, we obtain the following proposition. 
	\begin{proposition}
	\label{prop:ftiswinningabelard}
	The strategy $f_T$ is winning for $\forall$ in $\mathcal{E}@(\varphi_0, \rho_0)$.
	\end{proposition}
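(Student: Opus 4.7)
The plan is to fix an arbitrary full $f_T$-guided $\mathcal{E}$-match $\mathcal{M}$ initialised at $(\varphi_0, \rho_0)$ and show that $\exists$ does not win $\mathcal{M}$. I would split into three cases: $\mathcal{M}$ finite, $\mathcal{M}$ infinite with some position visited infinitely often, and $\mathcal{M}$ infinite with every position visited only finitely often. Lemmas \ref{lem:inevalimpliesinsequent}, \ref{lem:saturation} and \ref{lem:loops} are the main tools.

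For finite $\mathcal{M}$, the goal is to show $\forall$ is never the player who gets stuck. At a literal position $(p, \rho)$ reached by $\mathcal{M}$, Lemma \ref{lem:inevalimpliesinsequent} yields $p \in \Gamma(\rho)^-$, so the first saturation clause of Lemma \ref{lem:saturation} gives $\overline{p} \notin \Gamma(\rho)^-$, hence $p \notin V^T(\rho)$, which makes $(p,\rho)$ a dead-end belonging to $\exists$. The positions $(\overline p, \rho)$ and $(\langle a\rangle \varphi, \rho)$ with $R_a^T[\rho] = \emptyset$ are handled analogously. Together with the well-definedness of $f_T$ at $\land$- and $[a]$-positions (noted right after its definition), this shows that $\forall$ wins every finite $f_T$-guided match.

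Now suppose $\mathcal{M}$ is infinite and, for contradiction, won by $\exists$. By Proposition \ref{prop:propertiesaf} some $\nu$-formula $\nu x \chi$ occurs infinitely often in $\mathcal{M}$ and from some point onwards $\mathcal{M}$ contains no $\mu$-formula. If some state $\rho$ of $\mathbb{S}^T$ is such that $(\nu x \chi, \rho)$ occurs infinitely often within the $\mu$-free tail of $\mathcal{M}$, then between two consecutive such visits $\mathcal{M}$ deterministically unfolds to $(\chi[\nu x \chi/x], \rho)$ and eventually returns to $(\nu x \chi, \rho)$. Applying Lemma \ref{lem:loops} to this looping segment yields $\chi[\nu x \chi / x] \not\leadsto \nu x \chi \in \Gamma(\rho)$. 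However, Lemma \ref{lem:inevalimpliesinsequent} gives $\nu x \chi \in \Gamma(\rho)^-$, so the saturation clause for $\nu$-formulas in Lemma \ref{lem:saturation} also yields $\chi[\nu x \chi / x] \leadsto \nu x \chi \in \Gamma(\rho)$, directly contradicting the saturation clause stating that $\leadsto$ and $\not\leadsto$ cannot coexist for the same pair.

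It remains to handle the case in which every state of $\mathbb{S}^T$ is visited only finitely often, so that $\mathcal{M}$ visits infinitely many distinct states and in particular performs infinitely many modal steps. I would apply K\"onig's Lemma to the finitely-branching tree $T$, restricted to the set of $T$-positions lying on some state visited by $\mathcal{M}$, to extract an infinite branch $b$ of $T$. The contradiction sought is that $b$ is won by Prover, against the hypothesis that $T$ is winning for Refuter. For this, one projects the zig-zagging match $\mathcal{M}$ onto a straight upward trace along $b$, using the negated trace atoms generated by items 2(a)--2(d) of Definition \ref{defn:jump} to absorb the downward detours of $\mathcal{M}$ that correspond to backward modal moves in $\mathbb{S}^T$; the focus annotations and the infinitely many modal steps in $\mathcal{M}$ then guarantee that cofinitely many sequents on $b$ carry a formula in focus and that $\mathsf{R}_{[a]}$ is applied infinitely often along $b$. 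I expect this last case to be the main obstacle, because the choice of $b$ and the extraction of a well-behaved trace along it require combining the combinatorial content of K\"onig's Lemma with the trace-atom machinery, which was designed precisely to turn such zig-zagging traces in a two-way model into ordinary upward traces on a proof branch.
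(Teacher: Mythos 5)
Your overall architecture mirrors the paper's proof (finite matches; infinite matches with a recurring position; infinite matches escaping to infinity, with Lemmas \ref{lem:inevalimpliesinsequent}, \ref{lem:saturation} and \ref{lem:loops} in the same roles), and your finite case and your $\nu$-loop contradiction are exactly right. However, your case split is not exhaustive, and this is a genuine gap. Your second case assumes that some single position $(\nu x \chi, \rho)$ recurs infinitely often, while your third case assumes that \emph{every} state is visited only finitely often. This leaves uncovered the situation where some state $\sigma$ is visited infinitely often (so your third case does not apply) while the $\nu$-formula is unfolded at infinitely many \emph{distinct} states, so that no position $(\nu x \chi, \rho)$ recurs (and your second case does not apply either). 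Note that a mere loop at some position $(\varphi, \sigma)$ is useless here: Lemma \ref{lem:loops} then yields $\varphi \not\leadsto \varphi \in \Gamma(\sigma)$, which is precisely what saturation demands anyway; the contradiction requires the loop to sit at a $\nu$-unfolding. The paper closes this hole \emph{before} the case distinction: by positional determinacy (applied to the parity game in which $\forall$'s moves are frozen to the positional strategy $f_T$) one may assume without loss of generality that $\exists$ also plays positionally. Then as soon as any position repeats, the whole match is eventually periodic, hence visits only finitely many states, and pigeonhole together with Proposition \ref{prop:propertiesaf} forces some $\nu$-position to recur. Without this w.l.o.g.\ step your two infinite cases simply do not cover all $f_T$-guided matches.

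Second, your third case is, as you concede, only a plan, and the part you leave open is the crux. The paper does not invoke K\H{o}nig's Lemma; instead it exploits the hypothesis that each state is visited finitely often to build an escape chain directly: positions $\alpha(0) < \alpha(1) < \cdots$ such that after $\alpha(i)$ the match stays strictly above $\rho_{\alpha(i)}$ (after the last visit to a state, the match is trapped in one of its subtrees), with consecutive states $\rho_{\alpha(i)} \xrightarrow{a_i} \rho_{\alpha(i+1)}$ linked by single modal steps. This yields an infinite $T$-guided branch of $\mathcal{G}$ with infinitely many applications of $\mathsf{R}_{[a]}$; an arbitrary K\H{o}nig branch through the visited states would not come with the property that the match eventually stays above each of its nodes, which is what the projection needs. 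Moreover, the contradiction is not obtained by directly verifying Prover's winning condition on that branch, but by a reductio inside the reductio: since $T$ is winning for Refuter, the focus rule must be applied infinitely often along the branch; but once it is applied --- say in $\rho_{\alpha(i)}$, right after a modal step, so that Lemma \ref{lem:inevalimpliesinsequent} gives $\varphi_{\alpha(i)}^\f \in \Gamma(\rho_{\alpha(i)})$ --- an induction on $j$, combining Lemma \ref{lem:loops}, saturation, and clause 1 of Definition \ref{defn:jump} (the function $s$, which passes focus along negated trace atoms), shows $\varphi_{\alpha(j)}^\f \in \mathsf{first}(\rho_{\alpha(j)})$ for all $j > i$; hence every later sequent has a formula in focus and the focus rule is never applicable again, a contradiction. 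This persistence-of-focus induction is exactly the step your sketch gestures at but does not supply.
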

  Since $\varphi_0$ was chosen arbitrarily from $\Gamma^-$, we find that $\mathbb{S}^T \not \Vdash_{f_T} \Gamma^-$. Hence, by Theorem \ref{thm:posdet} of Appendix \ref{sec:apppar}, we obtain completeness for the formulas in a sequent. 
	\begin{proposition}
	\label{prop:completeness}
	If $\Gamma^-$ is valid, then $\Gamma$ has a $\mathsf{Focus^2_\infty}$-proof. 
	\end{proposition}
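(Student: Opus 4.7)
The plan is to prove the contrapositive: assuming $\Gamma$ has no $\mathsf{Focus^2_\infty}$-proof, construct a model, state, and ops refuting every annotated formula underlying $\Gamma^-$. The argument is essentially a packaging of machinery already in place, so I would keep it short.

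First, I would recall that a $\mathsf{Focus^2_\infty}$-proof of $\Gamma$ is, by definition, a winning strategy for Prover in $\mathcal{G}(\Sigma)@\Gamma$. Because $\mathcal{G}(\Sigma)$ was shown to be a parity game, Theorem~\ref{thm:posdet} gives positional determinacy; hence if Prover has no winning strategy at $\Gamma$, then Refuter has a \emph{positional} winning strategy $T$ in $\mathcal{G}(\Sigma)@\Gamma$.

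Next, I would apply the construction developed throughout the completeness subsection to this $T$. Viewing $T$ as a tree and restricting to Prover's canonical strategy (exhaustive productive cuts and $\mathsf{tc}$, then focus, then the saturating rules, then axioms, then a modal step), I obtain the Kripke model $\mathbb{S}^T$ and the distinguished state $\rho_0$ corresponding to the root sequent $\Gamma$. Let $\varphi_0$ range over $\Gamma^-$. Lemma~\ref{lem:saturation} ensures $f_T$ is well-defined; by Proposition~\ref{prop:ftiswinningabelard}, $f_T$ is winning for $\forall$ in $\mathcal{E}(\bigwedge \Sigma, \mathbb{S}^T)$ initialised at each $(\varphi_0, \rho_0)$ with $\varphi_0 \in \Gamma^-$.

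Finally, $f_T$ is only positional and need not be optimal, so it is not literally an ops. To repair this I would invoke Theorem~\ref{thm:posdet} once more on $\mathcal{E}(\bigwedge\Sigma, \mathbb{S}^T)$: since $f_T$ witnesses that each position $(\varphi_0, \rho_0)$ lies in $\forall$'s winning region, there is an ops $f$ for $\forall$ that is also winning at every such position. Unfolding the definition of $\Vdash_f$, this gives $\mathbb{S}^T, \rho_0 \not\Vdash_f \varphi_0^b$ for every $\varphi_0 \in \Gamma^-$ and every annotation $b$, so no disjunct of $\Gamma^-$ is satisfied at $\rho_0$ relative to $f$. Hence $\Gamma^-$ is not valid, completing the contrapositive.

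The real work has already been discharged in Lemmas~\ref{lem:saturation}, \ref{lem:inevalimpliesinsequent}, and \ref{lem:loops} and in Proposition~\ref{prop:ftiswinningabelard}; the only subtle point at this final stage is the passage from the merely positional $f_T$ to a genuine ops, which is handled cleanly by another appeal to positional determinacy. I anticipate no other obstacle: the restriction of the completeness claim to $\Gamma^-$ (rather than arbitrary sequents containing trace atoms) is precisely what allows the argument to close without having to verify that trace atoms in $\Gamma$ are also refuted by $f$.
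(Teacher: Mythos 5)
Your proposal is correct and takes essentially the same route as the paper: contraposition via determinacy of the proof-search parity game $\mathcal{G}(\Sigma)$ to obtain Refuter's winning strategy $T$, the construction of $\mathbb{S}^T$, $\rho_0$ and $f_T$ culminating in Proposition~\ref{prop:ftiswinningabelard}, and a closing appeal to Theorem~\ref{thm:posdet} to upgrade the merely positional $f_T$ to an ops $f$ with $\mathbb{S}^T, \rho_0 \not\Vdash_f \Gamma^-$. The paper's own proof proceeds in exactly this way, including the final determinacy step you flag as the only subtle point.
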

	\section{Conclusion}
	We have constructed a non-well-founded proof system $\mathsf{Focus}^\mathsf{2}_\infty$ for the two-way alternation-free modal $\mu$-calculus $\mathcal{L}^{\textit{af}}_{2\mu}$. This system naturally reduces to a cyclic system when restricting to positional strategies in the proof search game. 

	Using the proof search game and the game semantics for the modal $\mu$-calculus, we have shown that the system is sound for all sequents, and complete for those sequents not containing trace atoms. A natural first question for future research is to see if a full completeness result can be obtained. For this, a logic of trace atoms would have to be developed. One could for instance think of a rule like
	\[
	\AxiomC{$\varphi \leadsto \chi, \Gamma$}
	\AxiomC{$\psi \leadsto \chi, \Gamma$}
	\RightLabel{$\mathsf{R}_\land$}
	\BinaryInfC{$\varphi \land \psi \leadsto \chi, \Gamma$}
	\DisplayProof
	\]
	Following on this, we think it would be interesting to properly include trace atoms in the syntax by allowing the Boolean, modal and perhaps even the fixed point operators to apply to trace atoms. An example of a valid formula in this syntax is given by $((\varphi \leadsto \ld{a} \psi) \land \lb{a}(\psi \leadsto \ld{\breve{a}} \varphi)) \rightarrow \varphi$. 

	Another pressing question is whether our system could be used to prove interpolation, as has been done for language without backwards modalities in~\cite{MartiVenema21}. To the best of our knowledge it is currently an open question whether $\mathcal{L}^{\textit{af}}_{2\mu}$ has interpolation. At the same time, it is known that analytic applications of the cut rule do not necessarily interfere with the process of extracting interpolants from proofs~\cite{KowalskiO17,Nguyen01}. 

	Finally, it would be interesting to see if our system can be extended to the full language $\mathcal{L}_{2\mu}$. The main challenge would be to keep track of the most important fixed point variable being unfolded on a trace. Perhaps this could be done by employing an annotation system such as the one by Jungteerapanich and Stirling~\cite{Stirling14,Jungteerapanich09}, together with trace atoms that record the most important fixed point variable unfolded on a loop. 
\subsubsection{Acknowledgements} We thank Johannes Marti for insightful conversations at the outset of the present research. We also thank the anonymous reviewers for their helpful comments. 
\newpage
\appendix
  \section{Parity games}
  \label{sec:apppar}
  \begin{definition}
  A \emph{(two-player) game} is a structure $\mathcal{G} = (B_0, B_1, E, W)$ where $E$ is a binary relation on $B := B_0 + B_1$, and $W$ is a map $B^\omega \rightarrow \{0, 1\}$. 
  \end{definition}
  The set $B$ is called the \emph{board} of $\mathcal{G}$, and its elements are called \emph{positions}. Whether a position belongs to $B_0$ or $B_1$ determines which player \emph{owns} that position. If a player $\Pi \in \{0, 1\}$ owns a position $q$, it is their turn to play and the set of their \emph{admissible moves} is given by the image $E[q]$. 
  \begin{definition}
  A \emph{match} in $\mathcal{G} = (B_0, B_1, E, W)$ (or simply a \emph{$\mathcal{G}$-match}) is a path $\mathcal{M}$ through the graph $(B, E)$. A match is said to be \emph{full} if it is a maximal path.
  \end{definition}
  Note that a full match $\mathcal{M}$ is either finite, in which case $E[\mathsf{last}(\mathcal{M})] = \emptyset$, or infinite. For a $\Pi \in \{0, 1\}$, we write $\overline \Pi$ for the other player $\Pi + 1 \mod 2$.
  \begin{definition}
  A full match $\mathcal{M}$ in $\mathcal{G} = (B_0, B_1, E, W)$ is \emph{won} by player $\Pi$ if either $\mathcal{M}$ is finite and $\mathsf{last}(\mathcal{M}) \in B_{\overline{\Pi}}$, or $\mathcal{M}$ is infinite and $W(\mathcal{M}) = \Pi$. 
  \end{definition}
  If a full match $\mathcal{M}$ is finite, and $\mathsf{last}(\mathcal{M})$ belongs to $B_\Pi$ for $\Pi \in \{0, 1\}$, we say that the player $\Pi$ got \emph{stuck}. A \emph{partial match} is a match which is not full.
  \begin{definition}
  In the context of a game $\mathcal{G}$, we denote by $\textnormal{PM}_\Pi$ the set of partial $\mathcal{G}$-matches $\mathcal{M}$ such that $\mathsf{last}(\mathcal{M})$ belongs to the player $\Pi$. 
  \end{definition}
  \begin{definition}
  A strategy for $\Pi$ in a game $\mathcal{G}$ is a map $f : \textnormal{PM}_\Pi \rightarrow B$. Moreover, a $\mathcal{G}$-match $\mathcal{M}$ is said to be \emph{$f$-guided} if for any $\mathcal{M}_0 \sqsubset \mathcal{M}$ with $\mathcal{M}_0 \in \textnormal{PM}_\Pi$ it holds that $\mathcal{M}_0 \cdot f(\mathcal{M}_0) \sqsubseteq \mathcal{M}$. 
  \end{definition}
  For a position $q$, the set $\textnormal{PM}_\Pi(q)$ contains all $\mathcal{M} \in \textnormal{PM}_\Pi$ such that $\mathsf{first}(\mathcal{M}) = q$. 
  \begin{definition}
  A strategy $f$ for $\Pi$ in $\mathcal{G}$ is \emph{surviving} at a position $q$ if $f(\mathcal{M})$ is admissible for every $\mathcal{M} \in \textnormal{PM}_\Pi(q)$, and \emph{winning} at $q$ if in addition all full $f$-guided matches starting at $q$ are won by $\Pi$. A position $q$ is said to be \emph{winning} for $\Pi$ if $\Pi$ has a strategy winning at $q$. We denote the set of all positions in $\mathcal{G}$ that are winning for $\Pi$ by $\textnormal{Win}_\Pi(\mathcal{G})$.
  \end{definition}
  We write $\mathcal{G}@q$ for the game $\mathcal{G}$ \emph{initialised} at the position $q$ of $\mathcal{G}$. A strategy $f$ for $\Pi$ is \emph{surviving} (\emph{winning}) in $\mathcal{G}@q$ if it is surviving (winning) in $\mathcal{G}$ at $q$.
  \begin{definition}
  A strategy $f$ is \emph{positional} if it only depends on the last move, \emph{i.e.} if $f(\mathcal{M}) = f(\mathcal{M}')$ for all $\mathcal{M}, \mathcal{M}' \in \textnormal{PM}_\Pi$ with $\mathsf{last}(\mathcal{M}) = \mathsf{last}(\mathcal{M}')$. 
  \end{definition}
  We will often present a positional strategy for $\Pi$ as a map $f : B_\Pi \rightarrow B$. 
  \begin{definition}
  A \emph{priority map} on some board $B$ is a map $\Omega : B \rightarrow \omega$ of finite range. A \emph{parity game} is a game of which the winning condition is given by $W_\Omega(\mathcal{M}) = \max (\textit{Inf}_\Omega(\mathcal{M})) \mod 2$, where $\textit{Inf}_\Omega(\mathcal{M})$ is the set of positions occuring infinitely often in $\mathcal{M}$. 
  \end{definition}
  The following theorem captures the key property of parity games: they are \emph{positionally determined}. In fact, each player $\Pi$ has a positional strategy $f_\Pi$ that is \emph{optimal}, in the sense that $f_\Pi$ is winning for $\Pi$ in $\mathcal{G}@q$ for \emph{every} $q \in \text{Win}_\Pi(\mathcal{G})$.
  \begin{theorem}[\cite{mostowski,DBLP:conf/focs/EmersonJ91}]
  \label{thm:posdet}
  For any parity game $\mathcal{G}$, there are positional strategies $f_\Pi$ for each player $\Pi \in \{0, 1\}$, such that for every position $q$ one of the $f_\Pi$ is a winning strategy for $\Pi$ in $\mathcal{G}@q$. 
  \end{theorem}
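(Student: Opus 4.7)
\medskip

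\noindent\textbf{Proof proposal.} The plan is to prove positional determinacy by induction on the number $n = |\Omega(B)|$ of distinct priorities appearing on the board, combined with a secondary induction on the size of the board when necessary, following Zielonka's attractor decomposition. The central tool is the \emph{attractor} operator: for a set $U \subseteq B$ and a player $\Pi \in \{0,1\}$, define $\textnormal{Attr}_\Pi(U)$ as the least set containing $U$ and closed under the rules that (i) any $q \in B_\Pi$ with some $E$-successor in the set belongs to the set, and (ii) any $q \in B_{\overline\Pi}$ such that $E[q]$ is nonempty and contained in the set (or $E[q] = \emptyset$) belongs to the set. A routine induction on the stage at which a position enters the attractor yields a \emph{positional} attractor strategy $a_\Pi$ which, from any position of $\textnormal{Attr}_\Pi(U)$, forces every $a_\Pi$-guided match either to visit $U$ in finitely many moves or to leave $\overline\Pi$ stuck. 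Dually, the complement $B \setminus \textnormal{Attr}_\Pi(U)$ is itself a subgame: it is closed under all admissible moves of $\overline\Pi$, and every admissible $\Pi$-position in it has at least one successor remaining inside.

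For the base case $n \leq 1$, every infinite match is won by the player $\Pi := d \bmod 2$ where $d$ is the sole priority, and an attractor strategy toward the stuck positions of $\overline\Pi$ yields positional winning strategies for $\Pi$ everywhere, while $\overline\Pi$ wins exactly from positions where $\Pi$ has no move, trivially positional. For the inductive step, let $d = \max \Omega(B)$, $\Pi = d \bmod 2$, and $U = \Omega^{-1}(d)$. Apply the induction hypothesis to the subgame $\mathcal{G}_1$ on $B \setminus \textnormal{Attr}_\Pi(U)$, which has strictly fewer distinct priorities, to obtain positional winning regions $W_\Pi(\mathcal{G}_1)$ and $W_{\overline\Pi}(\mathcal{G}_1)$ with witnessing positional strategies. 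Let $X := W_{\overline\Pi}(\mathcal{G}_1)$ and form $Y := \textnormal{Attr}_{\overline\Pi}(X)$ inside the full game $\mathcal{G}$. On $Y$, the player $\overline\Pi$ wins positionally by the strategy that first plays the $\overline\Pi$-attractor strategy until $X$ is reached, then switches to the $\mathcal{G}_1$-strategy inside $X$; any infinite match respecting this strategy is eventually trapped inside $X$ and hence won by $\overline\Pi$.

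If $Y = B$, we are done. Otherwise consider the subgame $\mathcal{G}_2$ on $B \setminus Y$; this is a legal subgame because $\overline\Pi$ cannot escape $B \setminus Y$ from its own positions (by maximality of the attractor) and $\Pi$ always has a move staying inside. Applying the induction hypothesis either to $\mathcal{G}_2$ (if it has fewer priorities — which it need not) or, more robustly, running an inner induction on board size, yields positional winning regions in $\mathcal{G}_2$. The desired global strategy for $\Pi$ is then obtained by gluing: on $W_\Pi(\mathcal{G}_2)$, play $\Pi$'s $\mathcal{G}_2$-strategy; on the $\Pi$-attractor of $U$ inside $B \setminus Y$, play the attractor strategy toward $U$; and on $U$-positions owned by $\Pi$, play an arbitrary move staying in $B \setminus Y$ (which exists).

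The main obstacle is verifying that this glued positional $\Pi$-strategy $f_\Pi$ actually wins every infinite $f_\Pi$-guided match starting in the claimed winning region. The argument splits on whether the match visits $U$ infinitely often: if it does, then priority $d$ recurs and, being maximal, determines the winner as $\Pi$; if it does not, then after a finite prefix the match avoids $U$ and remains in $B \setminus Y$, hence is eventually confined to a subgame where $\Pi$'s $\mathcal{G}_2$-strategy is winning by induction. Care is needed to check that $\overline\Pi$ cannot steer the play out of $B \setminus Y$ into $Y$, which follows from the attractor's closure properties, and that $\Pi$'s attractor strategy indeed prevents indefinite avoidance of $U$ on the $\Pi$-attractor component. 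The optimality claim — that the strategies are winning on the \emph{entire} winning region of each player — then follows because the recursive decomposition exhausts the board and assigns each position to exactly one player's winning region with a positional witness.
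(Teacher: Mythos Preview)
The paper does not prove this theorem at all; it is stated in the appendix with citations to Mostowski and to Emerson--Jutla and then used as a black box throughout. So there is no ``paper's own proof'' to compare against. Your proposal is a recognisable outline of Zielonka's attractor-based argument, which is one of the standard routes to positional determinacy and is perfectly adequate for \emph{finite} parity games.

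There is, however, a genuine gap relative to the theorem as stated and as used in the paper. The definition in the appendix places no finiteness assumption on the board $B$; only the \emph{range} of the priority map $\Omega$ is required to be finite. And the paper genuinely needs the infinite case: the evaluation game $\mathcal{E}(\bigwedge\Sigma,\mathbb{S})$ is played on $\mathsf{Clos}(\bigwedge\Sigma)\times S$ with $S$ the state set of an arbitrary Kripke model. Your ``secondary induction on the size of the board'' is therefore not well-founded in the required generality: after removing $Y=\textnormal{Attr}_{\overline\Pi}(X)$ the residual game $\mathcal{G}_2$ may still be infinite and may still realise the top priority $d$, so neither of your two induction hypotheses applies.

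The standard repair, still in Zielonka's spirit, is to iterate the removal step transfinitely: set $Y_0=\emptyset$; given $Y_\alpha$, apply the inductive hypothesis (on the number of priorities) to the subgame on $B\setminus\textnormal{Attr}_\Pi(U)$ computed \emph{relative to} $B\setminus Y_\alpha$, and let $Y_{\alpha+1}$ be $Y_\alpha$ together with the $\overline\Pi$-attractor of $\overline\Pi$'s winning region there; take unions at limits. Monotonicity forces stabilisation at some ordinal, and at the fixed point your Case~1 analysis (the glued $\Pi$-strategy, with the ``visits $U$ infinitely often or eventually stays in the subgame'' dichotomy) goes through on the complement. Without this transfinite layer the argument does not cover the theorem as the paper states and uses it.
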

\bibliographystyle{splncs04}
\bibliography{af_2_focus}

\begin{thebibliography}{10}
\providecommand{\url}[1]{\texttt{#1}}
\providecommand{\urlprefix}{URL }
\providecommand{\doi}[1]{https://doi.org/#1}

\bibitem{Enqvistetal}
Afshari, B., Enqvist, S., Leigh, G.E., Marti, J., Venema, Y.: {Proof Systems
  for Two-way Modal mu-Calculus} (2023), {I}LLC Prepublication Series.
  {P}P-2023-03

\bibitem{AfshariJL19}
Afshari, B., J{\"{a}}ger, G., Leigh, G.E.: An infinitary treatment of full
  mu-calculus. In: Iemhoff, R., Moortgat, M., de~Queiroz, R.J.G.B. (eds.)
  WoLLIC 2019. Lecture Notes in Computer Science, vol. 11541, pp. 17--34.
  Springer (2019)

\bibitem{AfshariLT21}
Afshari, B., Leigh, G.E., Turata, G.M.: Uniform interpolation from cyclic
  proofs: The case of modal mu-calculus. In: Das, A., Negri, S. (eds.) TABLEAUX
  2021. LNCS, vol. 12842, pp. 335--353. Springer (2021)

\bibitem{ciabattoni2022theory}
Ciabattoni, A., Lang, T., Ramanayake, R.: A theory of cut-restriction: first
  steps (2022), arXiv: 2203.01600

\bibitem{DBLP:conf/focs/EmersonJ91}
Emerson, E.A., Jutla, C.S.: Tree automata, mu-calculus and determinacy
  (extended abstract). In: 32nd Annual Symposium on Foundations of Computer
  Science, San Juan, Puerto Rico, 1-4 October 1991. pp. 368--377. {IEEE}
  Computer Society (1991)

\bibitem{Gore1999}
Gor{\'e}, R.: Tableau methods for modal and temporal logics. In: D'Agostino,
  M., Gabbay, D.M., H{\"a}hnle, R., Posegga, J. (eds.) Handbook of Tableau
  Methods, pp. 297--396. Springer Netherlands, Dordrecht (1999)

\bibitem{Jungteerapanich09}
Jungteerapanich, N.: A tableau system for the modal
  \emph{{\(\mathrm{\mu}\)}}-calculus. In: Giese, M., Waaler, A. (eds.) TABLEAUX
  2009. LNCS, vol.~5607, pp. 220--234. Springer (2009)

\bibitem{Kashima94}
Kashima, R.: Cut-free sequent calculi for some tense logics. Studia Logica
  \textbf{53}(1),  119--136 (1994)

\bibitem{KowalskiO17}
Kowalski, T., Ono, H.: Analytic cut and interpolation for bi-intuitionistic
  logic. Review of Symbolic Logic  \textbf{10}(2),  259--283 (2017)

\bibitem{Kozen83}
Kozen, D.: Results on the propositional mu-calculus. Theoretical Computer
  Science  \textbf{27},  333--354 (1983)

\bibitem{kupke2020size}
Kupke, C., Marti, J., Venema, Y.: Size matters in the modal $\mu$-calculus
  (2020), arXiv: 2010.14430

\bibitem{LangeS01}
Lange, M., Stirling, C.: Focus games for satisfiability and completeness of
  temporal logic. In: LICS 2001. pp. 357--365 (2001)

\bibitem{MartiVenema21arxiv}
Marti, J., Venema, Y.: Focus-style proof systems and interpolation for the
  alternation-free {\(\mu\)}-calculus  (2021), arXiv: 2103.01671

\bibitem{MartiVenema21}
Marti, J., Venema, Y.: A focus system for the alternation-free
  {\(\mu\)}-calculus. In: Das, A., Negri, S. (eds.) TABLEAUX 2021. LNCS, vol.
  12842, pp. 371--388. Springer (2021)

\bibitem{mostowski}
Mostowski, A.: Games with forbidden positions. Tech. rep., Instytut Matematyki,
  Uniwersytet Gda\'nski, Poland (1991)

\bibitem{Nguyen01}
Nguyen, L.A.: Analytic tableau systems and interpolation for the modal logics
  {KB}, {KDB}, {K5}, {KD5}. Studia Logica  \textbf{69}(1),  41--57 (2001)

\bibitem{NiwinskiW96}
Niwi{\'n}ski, D., Walukiewicz, I.: Games for the mu-calculus. Theoretical
  Computer Science  \textbf{163}(1{\&}2),  99--116 (1996)

\bibitem{Ohnishi1957}
Ohnishi, M., Matsumoto, K.: {Gentzen method in modal calculi}. Osaka
  Mathematical Journal  \textbf{9}(2),  113 -- 130 (1957)

\bibitem{Rooduijn21}
Rooduijn, J.M.W.: Cyclic hypersequent calculi for some modal logics with the
  master modality. In: Das, A., Negri, S. (eds.) TABLEAUX 2021. LNCS, vol.
  12842, pp. 354--370. Springer (2021)

\bibitem{Zenger22}
Rooduijn, J.M.W., Zenger, L.: An analytic proof system for common knowledge
  logic over {S5}. In: David Fernández-Duque, A.P., Pinchinat, S. (eds.) AiML
  2022. pp. 659--680. College Publications (2022)

\bibitem{Stirling14}
Stirling, C.: A tableau proof system with names for modal mu-calculus. In:
  Voronkov, A., Korovina, M.V. (eds.) {HOWARD-60:} {A} Festschrift on the
  Occasion of Howard Barringer's 60th Birthday, EPiC Series in Computing,
  vol.~42, pp. 306--318 (2014)

\bibitem{Vardi98}
Vardi, M.Y.: Reasoning about the past with two-way automata. In: Larsen, K.G.,
  Skyum, S., Winskel, G. (eds.) ICALP 1998. LNCS, vol.~1443, pp. 628--641.
  Springer (1998)

\bibitem{venema2020lectures}
Venema, Y.: {L}ectures on the modal $\mu$-calculus. {L}ecture notes. Institute
  for Logic, Language and Computation, University of Amsterdam  (2020)

\bibitem{Walukiewicz00}
Walukiewicz, I.: Completeness of {K}ozen's axiomatisation of the propositional
  {\(\mathrm{\mu}\)}-calculus. Information and Computation  \textbf{157}(1-2),
  142--182 (2000)

\end{thebibliography}
\section{Proofs}
  \noindent \emph{Proof of Proposition \ref{prop:soundness}.} Our proof will go by contraposition, so suppose that some sequent $\Gamma$ is invalid. This means that there is a model $\mathbb{S}$ with a state $s$ and ops $f$ for $\forall$ in the game $\mathcal{E} := \mathcal{E}(\bigwedge \Sigma, \mathbb{S})$, such that $\mathbb{S}, s \not \Vdash_f \Gamma$. We will construct a (positional) winning strategy $T_f$ for Refuter in the game $\mathcal{G} := \mathcal{G}(\Sigma)$ initialised at $\Gamma$.
  
  Formally, this strategy is a function $T_f : \textnormal{PM}_R(\Gamma) \rightarrow \mathsf{Seq}_\Sigma$. In addition, we will define a function $s_f : \textnormal{PM}(\Gamma) \rightarrow \mathbb{S}$, from partial $\mathcal{G}$-matches starting at $\Gamma$ to states of $\mathbb{S}$, such that $\mathbb{S}, s_f(\mathcal{M}) \not \Vdash_f \mathsf{last}(\mathcal{M})$ for every $T_f$-guided $\mathcal{M} \in \text{PM}_P(\Gamma)$, and $\mathbb{S}, s_f(\mathcal{M}) \not \Vdash_f T_f(\mathcal{M})$ for every $T_f$-guided $\mathcal{M} \in \text{PM}_R(\Gamma)$. 
  
  We define $T_f$ and $s_f$ by induction on the length $|\mathcal{M}|$ of a match $\mathcal{M} \in \text{PM}(\Gamma)$. For the base case, \emph{i.e.} where $|\mathcal{M}| = 1$, we have $\mathcal{M} = \Gamma$. Since in this case $\mathcal{M} \in \text{PM}_P(\Gamma)$, we only have to define $s_f(\mathcal{M})$ and not $T_f(\mathcal{M})$. We set $s_f(\mathcal{M}) := s$.
  
  Now suppose that $T_f$ and $s_f$ have been defined for all matches up to length $n$, and that $|\mathcal{M}| = n + 1$. We assume that $\mathcal{M}$ is $T_f$-guided, for otherwise we may just assign $T_f(\mathcal{M})$ and $s_f(\mathcal{M})$ some garbage value.
  
  Suppose first that $\mathcal{M}$ belongs to $\text{PM}_P(\Gamma)$. Writing $\mathcal{M}_{\leq n} \in \text{PM}_R(\Gamma)$ for the initial segment of $\mathcal{M}$ consisting of the first $n$ moves, we set $s_f(\mathcal{M}) := s_f(\mathcal{M}_{\leq n})$. Since $\mathcal{M}$ is $T_f$-guided, we have $\mathsf{last}(\mathcal{M}) = T_f(\mathcal{M}_{\leq n})$. Hence it holds by the induction hypothesis that $\mathbb{S}, s_f(\mathcal{M}) \not \Vdash_f \mathsf{last}(\mathcal{M})$, as required.
  
  If $\mathcal{M}$ belongs to $\text{PM}_R(\Gamma)$, then $\mathsf{last}(\mathcal{M})$ is a rule instance and we distinct cases based on the rule $\mathsf{r}$ of $\mathsf{last}(\mathcal{M}) \in \mathsf{Inst}_\Sigma$.
  \begin{itemize}
    \item $\mathsf{r}$ is an axiom. This can never happen, because then $\mathcal{M}(n)$ would have to be valid while we inductively know that $s_f(\mathcal{M}_{\leq n})$ refutes $\mathcal{M}(n)$.
    \item $\mathsf{r} \in \{\mathsf{R}_\lor, \mathsf{R}_{\mu}, \mathsf{R}_\nu, \mathsf{F}, \mathsf{trans}\}$. In these cases there is only one choice $\Delta \in \mathsf{Seq}_\Sigma$ for Refuter. We set $T_f(\mathcal{M}) := \Delta$ and $s_f(\mathcal{M}) := s_f(\mathcal{M}_{\leq n})$.
    \item $\mathsf{r} = \mathsf{R}_\land$. We set $s_f(\mathcal{M}) := s_f (\mathcal{M}_{\leq n})$ and let $T_f(\mathcal M)$ be the premiss corresponding to $f(\varphi \land \psi, s_f (\mathcal M))$, where $\varphi \land \psi$ is the principal formula of $\mathsf{last}(\mathcal{M})$.
    \item $\mathsf{r} = \mathsf{R}_{[a]}$. In this case we let $s_f(\mathcal{M})$ be the state in $f([a] \varphi, s_f(\mathcal{M}_{\leq n}))$, where $[a] \varphi$ is principal in $\mathsf{last}(\mathcal{M})$. For $T_f$ there is only a single choice, say $\Delta$. We set $T_f(\mathcal{M}) := \Delta$.
    \item $\mathsf{r} \in \{\mathsf{cut}, \mathsf{tc}\}$. First, we set $s_f(\mathcal{M}) := s_f(\mathcal{M}_{\leq n})$. To define $T_f(\mathcal{M})$, note that there are two premisses: $A_1, \Gamma$ and $A_2, \Gamma$. Moreover, by the optimality of $f$, we have $\mathbb{S}, s_f(\Gamma) \Vdash_f A_1$ if and only if $\mathbb{S}, s_f (\Gamma) \not \Vdash_f A_2$. We let $T_f(\mathcal{M})$ be the unique $A_i, \Gamma$ such that $\mathbb{S}$ does \emph{not} satisfy $A_i$ at $s_f (\mathcal{M})$ with respect to $f$. 
  \end{itemize} 
  It is not hard to verify that in each case $\mathbb{S}$ indeed falsifies $T_f(\mathcal{M})$ at $s_f(\mathcal{M})$ with respect to $f$. Also note that $s_f(\mathcal{M})$ almost always equals $s_f (\mathcal{M}_{\leq n})$, with as only possible exception the case where $\mathcal{M}$ belongs to $\text{PM}_R(\Gamma)$ and the rule application of $\mathsf{last}(\mathcal{M})$ is modal.

  We will now show that $T_f$ is indeed a winning strategy for Refuter in $\mathcal{G}@\Gamma$. To that end, suppose towards a contradiction that Refuter loses a $T_f$-guided $\mathcal{G}@\Gamma$-match $\mathcal{M}$. We already know that Refuter does not get stuck, as an axiom is never reached and all other rule instances have a non-zero number of premisses. Hence, the match $\mathcal{M}$ must be infinite, and the rules of $\mathcal{G}$ dictate that there will be a final segment $\mathcal{N} = \Gamma_{0} \cdot i_0 \cdot \Gamma_{1} \cdot i_{1} \cdots$  of $\mathcal{M}$ on which every sequent $\Gamma_n$ has a formula in focus, and the rule instance $i_n$ is modal for infinitely many $n$. We use $\mathcal{K}$ to denote the initial segment of $\mathcal{M}$ occurring before $\mathcal{N}$, \emph{i.e.} such that $\mathcal{M} = \mathcal{K} \cdot \mathcal{N}$. Without loss of generality we assume that $|\mathcal{K}| > 0$. By K\H{o}nig's Lemma, there is a sequence of formulas $\varphi_0, \varphi_{1}, \ldots$ such that for every $n$ it holds that $\varphi_{n}^{\f} \in \Gamma_{n}$ as well as at least one of following:
  \begin{itemize}
    \item $\varphi_{n + 1}^\f \in \Gamma_{n + 1}$ is an immediate ancestor of $\varphi_{n}^\f \in \Gamma_n$;
    \item $i_n = \mathsf{R}_{[a]}$ and $\Gamma_{n}$ contains some $\varphi_{n} \not \leadsto \xi$ such that $\varphi_{n + 1}^\f \in \Gamma_{n+1}$ is an immediate ancestor of some $\xi^{\a} \in \Gamma_n$ with $b \in \{\u, \f\}$. 
  \end{itemize} 
  As before, we write $\mathcal{N}_{\leq n}$ for the initial segment of $\mathcal{N}$ up to the first $n$ moves. Note that $T_f (\mathcal{K} \cdot \mathcal{N}_{\leq 2n}) = \Gamma_n$ for every $n \geq 0$. For convenience we will denote $\mathcal{K} \cdot \mathcal{N}_{\leq 2n}$ by $\mathcal{M}_n$. We will reach a contradiction by showing that $\mathbb{S}, s_f(\mathcal{M}_0) \Vdash_f \varphi_0$, which contradicts the fact that $\mathbb{S}, s_f (\mathcal{M}_0) \not \Vdash_f T_f(\mathcal{M}_0) = \Gamma_0$.

  The crucial claim is that for every $n$ there is an $f$-guided $\mathcal{E}$-match starting at $(\varphi_n, s_f(\mathcal{M}_n))$ and ending at $(\varphi_{n+1}, s_f(\mathcal{M}_{n+1}))$, without passing through a $\mu$-unfolding. More precisely, we will show that there is an $f$-guided $\mathcal{E}$-match
  \[
  (\varphi_n, s_f(\mathcal{M}_n)) = (\psi_0, s_0) \cdots (\psi_m, s_m) = (\varphi_{n+1}, s_f(\mathcal{M}_{n+1})) \ \ \ (m \geq 0)
  \] 
  such that for no $i < m$ the formula $\psi_i$ is a $\mu$-formula. By pasting together these finite segments, it will then follow that the strategy $f$ is not winning for $\forall$ in $\mathcal{E}@(\varphi_0, s_f(\mathcal{M}_0))$, reaching the desired contradiction.

  We will first show the above claim under the assumption that $\varphi_{n+1}^\f$ is an immediate ancestor of $\varphi_n^\f$, and $\varphi_n = \varphi_{n+1}$. In this case $i_n$ is not the modal rule, since the modal rule has no side formulas. Hence $s_f(\mathcal{M}_n) = s_f(\mathcal{M}_{n+1})$ and thus $(\varphi_n, s_f (\mathcal{M}_n)) = (\varphi_{n + 1}, s_f (\mathcal{M}_{n+1}))$, by which the result holds vacuously.

  Now suppose that $\varphi_{n+1}^\f$ is an immediate ancestor of $\varphi_n^\f$ and $\varphi_n \not= \varphi_{n+1}$. We will show, by a case distinction on the main connective of $\varphi_n$, that the match proceeds to the desired position $(\varphi_{n+1}, s_f(\mathcal{M}_{n+1}))$ after a single round.
  \begin{itemize}
    \item First note that $\varphi_n$ cannot be atomic, for atomic formulas can only have immediate ancestors when they are side formulas.
    \item Suppose $\varphi_n$ is of the form $\psi_1 \lor \psi_2$. Then $\varphi_n^\f$ must be principal and we have $\varphi_{n+1} = \psi_i$ for some $i \in \{1, 2\}$. We let $\exists$ simply choose the appropriate disjunct. Since the rule of $i_n$ must be $\mathsf{R}_\lor$, we have $s_f (\mathcal{M}_n)  = s_f (\mathcal{M}_{n+1})$ and thus reach the desired position in $\mathcal{E}$.
    \item Suppose $\varphi_n$ is of the form $\psi_1 \land \psi_2$. Again we find that $\varphi_n^\f$ must be principal, the rule of $i_n$ now being $\mathsf{R}_\land$. By construction we have $\varphi_{n+1} = f(\varphi_n, s_f(\mathcal{M}_n))$, hence the the next position in $\mathcal{E}$ again suffices.
    \item Suppose $\varphi_n = \langle a \rangle \psi$. Then the rule of $i_n$ must be $\mathsf{R}_{[a]}$ and $\varphi_{n + 1} = \psi$. By construction, we have that $s_f (\mathcal{M}_{n +1})$ is the state of $f([a]\chi, s_f(\mathcal{M}_n))$, where $[a]\chi$ is the principal formula of the rule instance $i_n$. Since $s_f(\mathcal{M}_{n+1})$ is an $a$-successor of $s_f(\mathcal{M}_n)$ in $\mathbb{S}$, we can let $\exists$ choose $(\varphi_{n+1}, s_f(\mathcal{M}_{n+1}))$, as required.
    \item If $\varphi_n = [a] \chi$, then the rule of $i_n$ must be $\mathsf{R}_{[a]}$ and $\varphi_n$ must be the principal formula of this rule instance. As $s_f (\mathcal{M}_{n +1})$ is the state of $f([a]\chi, s_f(\mathcal{M}_n))$, the next position in $\mathcal{E}$ will be $(\chi, s_f(\mathcal{M}_{n+1})$, as required.
    \item $\varphi_n = \mu x \psi$ is not possible, because any immediate ancestor of $\mu x \psi^\f$ that is not a side formula, will be out of focus.
    \item Finally, suppose that $\varphi_n = \nu x \psi$. We have that $\varphi_{n+1} = \psi[\nu x \psi/x]$ and the rule of $i_n$ is $\mathsf{R}_\nu$. Because $s_f(\mathcal{M}_{n+1}) = s_f(\mathcal{M}_{n})$, the required position is reached immediately.
  \end{itemize}

  Finally, suppose that $\varphi_{n + 1}^\f$ is not an immediate ancestor of $\varphi^\f$. Then it must be the case that $i_n = \mathsf{R}_{[a]}$ and $\Gamma_{n}$ contains some $\varphi_{n} \not \leadsto \xi$ such that $\varphi_{n + 1}^\f$ is an immediate ancestor of some $\xi^{\a} \in \Gamma_n$. By assumption $\mathbb{S}, s_f(\mathcal{M}_n) \not \Vdash_f \Gamma_n$, and thus in particular $\mathbb{S}, s_f(\mathcal{M}_n) \Vdash_f \varphi_n \leadsto \xi$. Hence $\exists$ can take the $f$-guided match from $(\varphi_n, s_f(\mathcal{M}_n))$ to $(\xi, s_f(\mathcal{M}_n))$ without passing through a $\mu$-unfolding. Since $\xi^{\a}$ has an immediate ancestor (namely $\varphi_{n+1}^\f$), we find that $\xi$ must be of the form $\langle a \rangle \psi$ or of the form $[a] \chi$, where $[a] \chi$ is the principal formula of $i_n$. In either case we can ensure that the next position after $(\xi, s_f (\mathcal{M}_n))$ is $(\varphi_{n+1}, s_f(\mathcal{M}_{n+1}))$ by using the same argument as above for the $\langle a \rangle$ and $[a]$ cases, respectively.
  
  Since the modal rule is applied infinitely often in $\mathcal{M}$, the segments constructed above must infinitely often be nontrivial, \emph{i.e.} of length $> 1$. Hence, we obtain an infinite $f$-guided $\mathcal{E}@(\varphi_0, s_f(\mathcal{M}_0))$-match won by $\exists$, a contradiction. \qed \vspace{0.3cm}

  \noindent \emph{Proof of Lemma \ref{lem:inevalimpliesinsequent}}. Denote the $n$-th position of $\mathcal{M}$ by $(\varphi_n, \rho_n)$. We proceed by induction on $n$. The base case is simply the fact that $\varphi_0 \in \Gamma(\rho_0)^-$. For the induction step, suppose $(\varphi_n, \rho_n)$ is such that $\varphi_n \in \Gamma(\rho_n)^-$, and the next position is $(\varphi_{n+1}, \rho_{n+1})$. We make a case distinction based on the shape of $\varphi_n$. Note that $\varphi_n \not \in \{p, \overline p\}$, for otherwise there would not be a next position $(\varphi_{n + 1}, \rho_{n + 1})$.
  
  If the main connective of $\varphi_n$ is among $\{\lor, \mu, \nu\}$, it follows directly from saturation that $\varphi_{n + 1}$ belongs to $\Gamma(\rho_{n+1})^-$.  If $\varphi_n$ is a conjunction, then $\varphi_{n + 1}$ is the conjunct of $f_T(\varphi_n, \rho)$, which by the definition of $f_T$ belongs to $\Gamma(\rho_{n+1})^-$. 
    
  Now suppose $\varphi_n$ is of the form $\langle a \rangle \psi$. Then $\rho_{n} R_a^T \rho_{n + 1}$, so either $\rho_n \xrightarrow{a} \rho_{n + 1}$ or $\rho_{n + 1} \xrightarrow{\breve{a}} \rho_n$. If $\rho_{n} R_a^T \rho_{n + 1}$ we clearly have $\varphi_{n + 1} = \psi \in \Gamma(\rho_{n+1})^-$, by case 1(b) of Definition \ref{defn:jump}. Moreover, since in particular $\varphi_{n+1}^b \in \mathsf{first}(\rho_{n+1})$, it follows from the restriction on $T$ that in case the focus rule is applied in $\rho_{n+1}$, we have $\varphi_{n+1}^\f \in \Gamma(\rho_{n+1})$. If $\rho_{n + 1} \xrightarrow{\breve{a}} \rho_n$, we argue by contradiction:
  \begin{align*}
    \psi \notin \Gamma(\rho_{n + 1})^- &\Rightarrow \overline{\psi} \in \Gamma(\rho_{n + 1})^- &\text{(Saturation)} \\
    &\Rightarrow [a] \overline{\psi} \in \Gamma (\rho_n)^- &\text{(Case 1(c) of Definition \ref{defn:jump}, $[a] \overline \psi = \overline{\langle a \rangle \psi} \in \Sigma$)} \\
    &\Rightarrow \langle a \rangle \psi \notin \Gamma(\rho_n)^-, &\text{(Saturation)}
  \end{align*}
  which indeed contradicts the inductive hypothesis that $\langle a \rangle \psi \in \Gamma(\rho_n)^-$. Moreover, if the focus rule is applied in $\rho_{n+1}$, we again argue by contradiction. Suppose $\psi^\f \notin \Gamma(\rho_{n+1})$. Then $\rho_{n+1}^-$ does not contain $\psi^\u$ after phase (1), whence we must have $\overline \psi \in \Gamma(\rho_{n+1})^-$. But then saturation gives $\psi \notin \Gamma(\rho_{n+1})^-$, and we can use the same argument as before. Finally, the case where $\varphi_n$ is of the form $[a] \varphi$ is similar to the easy part of the previous case and therefore left to the reader. \qed \vspace{0.3cm}

  \noindent \emph{Proof of Lemma \ref{lem:loops}}. We proceed by induction on the number of distinct states occurring in $\mathcal{N}$. 

  For the base case, we assume that $\rho$ is the only state visited in $\mathcal{N}$. We proceed by induction on the length $n + 1$ of $\mathcal{N}$. For the (inner) base case, where $|\mathcal{N} = 1|$, we have $\mathsf{first}(\mathcal{N}) = (\varphi, \rho) = \mathsf{last}(\mathcal{N})$. By saturation $\varphi \leadsto \varphi \notin \Gamma(\rho)$ and thus $\varphi \not \leadsto \varphi \in \Gamma(\rho)$, as required. For the inductive step, suppose the claim holds for every match up to size $n + 1$. Suppose $|\mathcal{N}| = n + 2$ and consider the final transition $(\chi, \rho) \cdot (\psi, \rho)$ of $\mathcal{N}$. Since the match proceeds after the position $(\chi, \rho)$, but does not move to a new state of $\mathbb{S}^T$, it follows from the irreflexivity of $\mathbb{S}^T$ that the main connective of $\chi$ must be among $\{\lor, \land, \mu, \nu\}$. Moreover, by Lemma \ref{lem:inevalimpliesinsequent}, we have $\chi \in \Gamma(\rho)^-$. We claim that $\chi \not \leadsto \psi \in \Gamma(\rho)^-$. When the main connective of $\chi$ is in $\{\lor, \mu, \nu\}$, this follows directly from saturation. If $\chi$ is a conjunction, we have, since $\mathcal{M}$ is $f_T$-guided, that $(\psi, \rho) = f_T(\chi, \rho)$. By the definition of $f_T$, it follows that $\chi \not \leadsto \psi \in \Gamma(\rho)$, as required. We finish the proof of this special case of the lemma by applying the induction hypothesis to the initial segment of $\mathcal{N}$ obtained by removing the last position $(\psi, \rho)$. This gives $\varphi \not \leadsto \chi \in \Gamma(\rho)$, hence by saturation $\varphi \not \leadsto \psi \in \Gamma(\rho)$. 
  
  For the (outer) inductive step, suppose that $n > 1$ states are visited in $\mathcal{N}$. We write $\mathcal{N}$ as $\mathcal{A}_1 \cdot \mathcal{B}_1 \cdot \mathcal{A}_2 \cdot \mathcal{B}_2 \cdots \mathcal{A}_m$,
  where for every $(\chi, \tau)$ in $\mathcal{A}_i$ it holds that $\tau = \rho$ and for every $(\chi, \tau)$ in $\mathcal{B}_i$ it holds that $\tau \not= \rho$. As $\mathbb{S}^T$ is a forest, there must for each $\mathcal{B}_i$ be some $\gamma_i$, $\delta_i$, and $\tau_i$ such that $\mathsf{first}(\mathcal{B}_i) = (\gamma_i, \tau_i)$ and $\mathsf{last}(\mathcal{B}_i) = (\delta_i, \tau_i)$. Denote $\mathsf{first}(\mathcal{A}_i) = (\alpha_i, \rho)$ and $\mathsf{last}(\mathcal{A}_i) = (\beta_i, \rho)$. Summing up, we will we use the following notation for each $i \in [1, m)$: 
   \begin{align*}
    \mathsf{first}(\mathcal{A}_i) = (\alpha_i, \rho), &&
    \mathsf{last}(\mathcal{A}_i) = (\beta_i, \rho), &&
    \mathsf{first}(\mathcal{B}_i) = (\gamma_i, \tau_i), &&
    \mathsf{last}(\mathcal{B}_i) = (\delta_i, \tau_i).
  \end{align*} 
  Let $i \in [1, m)$ be arbitrary. Since $\mathcal{B}_i$ does not visit $\rho$, it must visit strictly less states than $\mathcal{N}$. By the induction hypothesis we find that $\gamma_i \not \leadsto \delta_i \in \Gamma(\tau_i)$. We claim that $\alpha_i \not \leadsto \beta_{i+1} \in \Gamma(\rho)$. Since the match $\mathcal{N}$ transitions from the state $\rho$ to the state $\tau_i$, there must be some $a \in \mathsf{D}$ such that $\rho R^T_a \tau_i$. 

We first assume that $\rho \xrightarrow{a} \tau_i$. Then by the nature of the game, $\beta_i$ must be of the form $\beta_i = \langle a \rangle \gamma_i$ or of the form $\beta_i = [a] \gamma_i$, and, since by definition $f_T$ only moves upwards in $\mathbb{S}^T$, we must have $\delta_i = \langle \breve a \rangle \alpha_{i+1}$. We only cover the case where $\beta_i = [a] \gamma_i$ (the case where $\beta_i = \langle a \rangle \gamma_i$ is almost the same, but uses 2(c) instead of 2(a) of Definition \ref{defn:jump}). We indeed find:
  \begin{align*}
      & \gamma_i \not \leadsto \langle \breve{a} \rangle \alpha_{i+1} \in \Gamma(\tau_i) &&\text{(Induction hypothesis, $\delta_i = \ld{\breve{a}} \alpha_{i+1}$)}\\
      \Rightarrow \ &\gamma_i \leadsto \langle \breve{a} \rangle \alpha_{i+1} \notin \Gamma(\tau_i) &&\text{(Saturation)} \\
      \Rightarrow \ &[a]\gamma_i \leadsto \alpha_{i+1} \notin \Gamma(\rho) &&\text{(Case 2(a) of Definition \ref{defn:jump})} \\
      \Rightarrow \ &\beta_i \not \leadsto \alpha_{i+1} \in \Gamma(\rho), &&\text{(Saturation, $\beta_i = \lb{a} \gamma_i$)}
    \end{align*}
Now suppose that $\tau_i \xrightarrow{\breve{a}} \rho$. Then $\beta_i$ must be of the form $\beta_i = \langle a \rangle \gamma_i$, because the strategy $f_T$ moves only upwards in $\mathbb{S}^T$. Moreover, we have $\delta_i = [\breve a] \alpha_{i+1}$ or $\delta_i = \langle \breve a \rangle \alpha_{i + 1}$. An argument similar to the one above, respectively using cases 2(b) and 2(d) of Definition \ref{defn:jump}, shows that $\ld{a} \gamma_i \not \leadsto \alpha_{i+1} \in \Gamma(\rho)$. 

  Applying the induction hypothesis to the $\mathcal{A}_i$, we have $\alpha_i \not \leadsto \beta_i \in \Gamma(\rho)$ for every $1 \leq i \leq m$. Hence, by saturation, we find $\gamma_1 \not \leadsto \delta_{m} \in \Gamma(\rho)$, as required. \qed \vspace{0.3cm}

  \noindent \emph{Proof of Proposition \ref{prop:ftiswinningabelard}}. Let $\mathcal{M}$ be an arbitrary $f_T$-guided and full $\mathcal{E}$-match. By positional determinacy, we may without loss of generality assume that $\exists$ adheres to some positional strategy in $\mathcal{M}$. First suppose that $\mathcal{M}$ is finite. We consider the potential cases one-by-one.

  If $\varphi$ is a propositional letter $p$, we find:
  \[
  \varphi = p \Rightarrow p \in \Gamma(\rho)^- \Rightarrow \overline p \notin \Gamma(\rho)^- \Rightarrow \mathbb{S}^T, \rho \not \Vdash p,
  \]
  where the first implication holds due to Lemma \ref{lem:inevalimpliesinsequent}, the second due to saturation, and the third by the definition of the valuation function of $\mathbb{S}^T$. It follows that in this case $\exists$ gets stuck. 

  Similarly, if $\varphi$ is a negated propositional letter $\overline p$, we find:
  \[
  \varphi = \overline p \Rightarrow \overline p \in \Gamma(\rho)^- \Rightarrow \mathbb{S}^T, \rho \Vdash p \Rightarrow \mathbb{S}^T, \rho \not \Vdash \overline p,
  \]
  hence again $\exists$ gets stuck. 

  Finally, we claim that $\varphi$ is not of the form $[a]\psi$. Indeed, in that case the fact that $[a] \psi \in \Gamma(\rho)^-$ would entail that the modal rule is applicable. Hence $f_T(\varphi, \rho)$ would be defined, contradicting the assumed fullness of $\mathcal{M}$.

  Now suppose that $\mathcal{M}$ is infinite, say $\mathcal{M} = (\varphi_n , \rho_n)_{n \in \omega}$. Suppose first that some state $\rho$ is visited infinitely often in $\mathcal{M}$. By the pigeonhole principle, there must be a formula $\varphi$ and segment $\mathcal{N}$ of $\mathcal{M}$ such that $\mathsf{first}(\mathcal{N}) = \mathsf{last}(\mathcal{N}) = (\varphi, \rho)$. Since both players follow a positional strategy, we can write the match $\mathcal{M}$ as $\mathcal{K} \mathcal{N}^*$, where $\mathcal{K}$ is some initial segment of $\mathcal{M}$. But this means that only finitely many states of $\mathbb{S}^T$ occur in $\mathcal{M}$. As $\mathcal{M}$ is winning for $\exists$, there must, by Proposition \ref{prop:propertiesaf}, be some formula $\nu x \psi$ occurring infinitely often in $\mathcal{M}$. Therefore, there must be a position $(\nu x \psi, \tau)$ occurring infinitely often in $\mathcal{M}$. But then Lemma \ref{lem:loops} gives $\varphi[\nu x \varphi/ x] \not \leadsto \nu x \psi \in \Gamma(\tau)$, contradicting saturation.

  Hence we may assume that $\mathcal{M}$ visits each state $\rho$ at most finitely often. Suppose, towards a contradiction, that $\mathcal{M}$ is won by $\exists$. Let $(\varphi_{\alpha(0)}, \rho_{\alpha(0)})$ be a position of $\mathcal{M}$ after which every unfolding is a $\nu$-unfolding, and $\rho_n > \rho_{\alpha(0)}$ for every $n > \alpha(0)$. Recursively let $\alpha({i + 1})$ be the least index greater than $\alpha(i)$ such that for every $m > \alpha({i + 1})$ it holds that $\rho_m > \rho_{\alpha(i+1)}$. 

  It is not hard to see that for each $i$ there is an $a_i \in \mathsf{D}$ with $\rho_{\alpha(i)} \xrightarrow{a_i} \rho_{\alpha(i+1)}$. This gives a $T$-guided $\mathcal{G}$-match $\mathcal{K} = \rho_{\alpha(0)} \cdot \mathsf{R}_{[a_{\alpha(0)}]} \cdot \rho_{\alpha(1)} \cdot \mathsf{R}_{[a_{\alpha(1)}]} \cdot \rho_{\alpha(2)} \cdot \mathsf{R}_{[a_{\alpha(2)}]} \cdots$. Note that $\mathcal{K}$ is infinite, as $\mathcal{M}$ visits infinitely many states. Because $T$ is by assumption winning for Refuter, the focus rule must be applied infinitely often. 

  Let $\rho_{\alpha(i)}$ with $i > 0$ be a segment on which the focus rule is applied. Note that $\varphi_{\alpha(i) - 1}$ is modal, hence we obtain by Lemma \ref{lem:inevalimpliesinsequent} that $\varphi_{\alpha(i)}^\f \in \Gamma(\rho_{\alpha(i)})$. We claim that for every $j > i$ it holds that every sequent in $\rho_{\alpha(j)}$ has a formula in focus. With this we reach the desired contradiction, because it means that the focus rule cannot be applied on this final segment of $\mathcal{K}$ after all. 

  In particular, we will show that $\varphi_{\alpha(j)}^\f \in \mathsf{first}(\rho_{\alpha(j)})$ for every $j > i$, which suffices by the restriction of $T$ to cumulative rule applications. We proceed by induction on $j - i$. For the base case, we wish to show that $\varphi_{\alpha(i+1)}^\f \in \mathsf{first}(\rho_{\alpha(i + 1)})$. To that end, consider $\mathcal{J} = (\varphi_{\alpha(i)}, \rho_{\alpha(i)}) \cdots (\varphi_{\alpha(i+1) - 1}, \rho_{\alpha({i + 1}) -1})$. Since $T$ is a forest, we have $\rho_{\alpha(i+1) - 1} = \rho_{\alpha(i)}$ and thus either $\alpha(i) = \alpha({i + 1}) - 1$, in which case, by saturation $\varphi_{\alpha(i)} \not \leadsto \varphi_{\alpha(i + 1) - 1} \in \Gamma(\rho_{\alpha(i + 1) - 1})$, or $|\mathcal{J}| > 1$ and we may apply Lemma \ref{lem:loops} to again obtain $\varphi_{\alpha(i)} \not \leadsto \varphi_{\alpha(i+1) - 1} \in \Gamma(\rho_{\alpha(i)})$. Since $\rho_{\alpha(i)} \xrightarrow{a_i} \rho_{\alpha(i+1)}$, it follows that $\varphi_{\alpha(i+1) - 1}$ must be of the form $\langle a_{i} \rangle \varphi_{\alpha(i+1)}$ or of the form $[a_{i}] \varphi_{\alpha(i+1)}$. In either case, Definition \ref{defn:jump}.1 gives $\varphi_{\alpha(i+1)}^\f \in \mathsf{first}(\rho_{\alpha(i+1)})$, as required. 

  For the induction step we can use precisely the same argument. \qed
\end{document}